\newtheorem{Theorem}{Theorem}
\newtheorem{Lemma}{Lemma}[section]
\newtheorem{Comment}{Remark}
\newtheorem{Fact}{Fact}
\newtheorem{Ex}{Example}[section]
\theoremstyle{remark}
\newcommand{\be}{\begin{equation}}
\newcommand{\ee}{\end{equation}}
\newcommand{\Id}{\textrm{\rm Id}}
\newcommand{\ddd}{\mathrm{d}}
\newcommand{\pd}[2]{\frac{\partial#1}{\partial#2}}
\newcommand{\tr}{\operatorname{tr}}
\newcommand{\const}{\mathrm{const}}
\newcommand{\weg}[1]{}
\title{ Lax pairs for BKM hierarchy}
\author{  
Andrey Yu.\  Konyaev\footnote{Faculty of Mechanics and Mathematics and Center for Fundamental and Applied Mathematics, Moscow State University, 119992, Moscow, Russia  and Institute of Mathematics and Mathematical Modeling, Almaty, Kazakhstan
 \ \ \quad{\tt  maodzund@yandex.ru}}  \, and 
    Vladimir S.\ Matveev\footnote{
Institut f\"ur Mathematik, Friedrich Schiller Universit\"at Jena,
07737 Jena,  Germany  \ \ \quad {\tt  vladimir.matveev@uni-jena.de}}}
\date{ }
\begin{document}

\maketitle

\begin{abstract} 
We construct Lax pairs for the recently (2023) introduced  integrable PDE systems known as the BKM equations. As  many known and previously studied integrable systems are special cases of the BKM systems, our construction provides Lax pairs for many integrable hierarchies, including previously studied ones such as Camassa-Holm,  Dullin-Gottwald-Holm, cKdV, Ito,  and Marvan–Pavlov, as well as new ones. The corresponding pair is related to a Sturm–Liouville operator on the real line whose potential depends rationally on the spectral parameter.
\end{abstract}
\tableofcontents

\section{Introduction}
A \emph{Lax pair} (named after P.\,D.~Lax, who introduced it in \cite{Lax68}) for an evolutional  equation is a pair of linear operators or matrices
\(
\mathbb{L}(t),\; \mathbb{P}(t)
\)
such that the dynamics can be written as the \emph{Lax equation}
\begin{equation}\label{eq:Lax}
\frac{d\mathbb{L}}{dt} \;=\; [\mathbb{P},\mathbb{L}] \;=\; \mathbb{P}\mathbb{L} - \mathbb{L}\mathbb{P}.
\end{equation}
In many interesting PDE/ODE examples, $\mathbb{L}$ is a differential operator depending on the unknown functions  (referred to as field variables), while $\mathbb{P}$ is another operator depending on  the same field variables.

It is known (and easy to prove) that 
if $\mathbb{L}$ evolves by \eqref{eq:Lax}, then all spectral invariants of $\mathbb{L}$ are constant in time. In particular, in the ODE case,
\[
\frac{d}{dt}\,\mathrm{tr}\big(\mathbb{L}^k\big) \;=\; \mathrm{tr}\big([\mathbb{P},\mathbb{L}^k]\big) \;=\; 0, \qquad k=1,2,\dots .
\]
Thus the eigenvalues of 
$\mathbb{L}(t)$ are conserved  (a property sometimes referred to as \emph{isospectrality}), thereby providing integrals of motion. The existence of a (nontrivial) Lax pair is sometimes considered a criterion for the integrability of a dynamical system.  

A canonical PDE example is the famous Korteweg–de Vries (KdV) equation.
For its one field variable $u=u(x,t)$, the equation  
\begin{equation} \label{eq:KdV}
    u_t = \frac{1}{2} u_{xxx}+ \frac{3}{2}  u u_x  
\end{equation}

has Lax pair
\begin{equation} \label{eq:lax_old}
\mathbb{L} =\partial_x^2 + \frac{1}{2}u(x,t),
\qquad
\mathbb{P} = 2\,\partial_x^3 + \frac{3}{2}u(x,t)\,\partial_x + \frac{3}{4}u_x(x,t)
\end{equation}
It is known and easy to check that for these $\mathbb{L}$ and $\mathbb{P}$ 
the equation  $\mathbb{L}_t=[\mathbb{P},\mathbb{L}]$ is equivalent to KdV; the spectrum of $\mathbb{L}$ is  time-invariant, and the study of the spectrum and eigenfunctions of the operator $\mathbb{L}$ and the dependence of the eigenfunctions on time $t$ allows one to solve the KdV equation via the  inverse scattering method \cite{FaddeevTakhtajan}.

For the goals of our paper, another form for the Lax pair for the KdV equation is more useful: 
\begin{equation}\label{eq:laxnew}
\mathbb{L} \left( {\mu}\right)
=
\begin{pmatrix}
  \partial_x & -1\\[0.3em] 
-\tfrac{\mu}{2} + \tfrac{1}{2}u(x,t) &   \partial_x
\end{pmatrix}
 \ ,  \quad 
\bar{\mathbb{P}}\!\left( {\mu}\right) =
\begin{pmatrix}
-\tfrac{1}{4}u_x &
 \mu + \tfrac{1}{2}u\\[0.6em]
-\tfrac{1}{4}u_{xx} - \tfrac{1}{4}u^2 - \tfrac{\mu}{4}u + \tfrac{\mu^2}{2} &
\tfrac{1}{4}u_x
\end{pmatrix}.
\end{equation}
We see that the operators act   now on  the  space of  2-dimensional-vector-valued   functions, i.e., functions of the form $(\psi_1(x), \psi_2(x))^T$. The entries of the second operator are differential polynomials. 
The Lax pair now depends on the  ``spectral'' parameter $\mu$, and gives the KdV equation for each  value of $\mu$.

The relation between the $\mathbb{L}$-operator of \eqref{eq:lax_old} and \eqref{eq:laxnew} is clear: for the  vectors  $(\psi_1(x), \psi_2(x))^T$  lying in the kernel of  $\mathbb{L}$ from  \eqref{eq:laxnew}, i.e., satisfying 
$$
\begin{pmatrix}
\partial_x & -1 \\ 
\tfrac{1}{2}\bigl(u -\mu \bigr) & \partial_x
\end{pmatrix}
\begin{pmatrix}\psi_1(x)\\ \psi_2(x)  \end{pmatrix}= 0,
$$
$\psi_2= \tfrac{\partial \psi_1}{\partial x}$ and  $\psi_1$  
satisfies $\tfrac{\partial^2  \psi_1}{\partial^2 x}  + \tfrac{1}{2} (u - \mu) \psi_1  =0.   $ We see that the 
function $\psi_1$ is an eingenfunction of   $\mathbb{L}$ from \eqref{eq:lax_old}
corresponding to eigenvalue $\mu$. Moreover, for every  eigenfunction $\psi$
of  $\mathbb{L}$ from \eqref{eq:lax_old} with eigenvalue $\mu$, the vector-function $\left(\psi, \tfrac{\partial}{\partial x} \psi\right)^T$ lies in the kernel of $\mathbb{L}$ given by \eqref{eq:laxnew}.

In our paper we construct a Lax pair for the so-called BKM systems and the corresponding hierarchy. This is a recently discovered~\cite{nijapp4} family of integrable systems of partial differential equations arising from Nijenhuis geometry, marking a new and challenging development at the intersection of differential geometry and mathematical physics.

These multicomponent, nonlinear, dispersive systems display features characteristic of physically relevant models: covariance, infinitely many conservation laws, hidden symmetries, and the existence of compactly supported and quasi-periodic solutions. Special cases recover several classical integrable models of central importance in physics and hydrodynamics, such as the KdV,  Camassa–Holm  and  Dullin-Gottwald-Holm equations, the Kaup–Boussinesq and Ito systems,  and less known coupled KdV, coupled Harry Dym, multicomponent Camassa–Holm equations, and the Marvan–Pavlov systems.

The initial construction of BKM systems was motivated by, and based on, the study of compatible multicomponent geometric Poisson structures in~\cite{nijapp2} (dispersionless case) and~\cite{nijapp3} (with dispersion). BKM systems admit (sufficiently many) conservation laws and symmetries of increasing order \cite{nijapp4}. Moreover, infinitely many essentially different “finite-gap  type” 
solutions of these equations can be constructed by reduction to finite-dimensional systems that are integrable in the sense of Liouville, whose solutions can be found by quadratures \cite{BKMreduction}, and, in certain cases, exactly.     
As mentioned above, the main result of the present paper is a construction of a Lax pair for the BKM systems. As in \eqref{eq:laxnew}, the Lax pairs are given by $2\times 2$ matrices whose entries are differential operators  acting on vector-valued functions; see Section~\ref{sec5}.

The BKM systems come with a natural continuous parameter, which we denote by $\lambda$, and are obtained as the first nontrivial term in a power expansion of a certain equation with respect to this parameter. Symmetries of the BKM system form a hierarchy corresponding to the next terms of this expansion. The Lax pair also depends on this parameter (in addition to the spectral parameter $\mu$), and the power expansions with respect to $\lambda$ provide Lax pairs for all members of the hierarchy.

The paper is organized as follows.
Section~\ref{sec1} provides the mathematical setup. We introduce/repeat the notion of formal differential series and formal evolutionary vector field, and  explain how studying  
evolutionary PDEs with  constraints can be reduced to studying formal evolutionary vector fields.
 BKM systems  and their symmetries are recalled in Section~\ref{sec_bkm}.  
    In Section \ref{sec4}, see  Theorem \ref{t4},  we rewrite the BKM system in an equivalent   parametric form. Theorem \ref{t5}  provides a    Lax pair for the system written in this new form. 
In Section \ref{sec5} we give Lax pairs for BKM of types I-IV and  describe an  algorithm for constructing Lax pairs for the hierarchies of symmetries. As examples, we construct Lax pairs  for KdV and Kaup-Bussinesq systems.
    Theorem \ref{t1} in Section \ref{sec3}  describes the normal forms  of the operators with respect to the conjugation by formal differential matrix series, and Theorem \ref{t2} describes the centralizer,    in the same category,  of a generic element. We use special case of these results to 
explain how one can use the constructed Lax pairs to find (hierarhies of) conservation laws for the BKM systems and explain the Lax pair geometry staying behind the constraint. Finally, in the conclusion, we sketch possible developments and applications of the results of the paper.


\section{Formal differential series and their relation to evolutionary PDEs with constraints}\label{sec1}
Let $u^1, \dots, u^n$ be coordinates on the $n$-dimensional ball $U^n$. The 
 space $J^\infty U^n$ is a bundle over $U^n$, each leaf of which is an infinite-dimensional vector space with coordinates $u^i_{x^j}$, $i = 1, \dots, n$, $j = 1, 2, \dots$. We refer to them as \emph{derivative coordinates} and use the convention $u^i_{x^0} = u^i$. Sometimes we write $u_{xx}, u_{xxx}$, etc. instead of $u_{x^2}, u_{x^3}$.

A function on $J^\infty U^n$ of the form
\begin{equation} \label{eq:h(u)}
h(u) (u^{i_1}_{x^{j_1}})^{n_1} \dots (u^{i_k}_{x^{j_k}})^{n_k},
\end{equation}
where $h(x)$ is a function (smooth, analytic, complex-valued, etc., depending on the situation; in this section we may assume that it is $C^\infty$-smooth) on $U^n$ and $n_i$ stands for the power, will be called {\it a differential monomial of degree} $j = j_1 n_1 + \dots + j_k n_k$.

A \emph{differential polynomial} is defined as a sum of differential monomials, only finitely many of which are different from zero. The \emph{degree} of a differential polynomial is the maximum of the degrees of its nonzero monomials. We say that the polynomial is \emph{homogeneous} if all its nonzero monomials have the same degree.

We define the operator $D$ on differential polynomials as
\begin{equation}\label{eq:D}
D = \sum_{j = 0}^\infty u^q_{x^{j + 1}} \pd{}{u^q_{x^j}}.
\end{equation}
This operator is well-defined, as for each differential polynomial $f$ there are only a finite number of nonzero terms in the formula for $Df$. The operator $D$ is called the \emph{complete derivative} (another standard notation for it in the literature is $\tfrac{d}{dx}$). The operator $D$ increases the differential degree of any differential polynomial by one, unless the differential polynomial has order $0$ and the corresponding coefficient is a constant. Clearly, $D\const=0.$

We denote by $\mathrm F (U^n)$ the ring of \emph{formal differential series}, that is, the series
$$
v = v_0 + v_1 + \dots,
$$
where each $v_j$ is a homogeneous differential polynomial of degree $j$. In particular, $v_0$ is a function on $U^n$. Addition and multiplication of the series are naturally defined. $D$ extends  to the formal differential series by $$D(v) = D(v_0) + D(v_1) + \dots \ .$$
We say that the series $v$ is \emph{formally invertible} if the function $v_0$ is not identically zero.
The reciprocal of an invertible series is given by the natural formula
\begin{eqnarray} \label{eq:rec}
(v_0+ v_1 + \dots)^{-1}&=& \tfrac{1}{v_0} \left(1-\tfrac{-v_1}{v_0}- \tfrac{-v_2}{v_0} - \dots\right)^{-1} \\ &=&   \tfrac{1}{v_0}
\left(1 + \left(\tfrac{-v_1}{v_0}+ \tfrac{-v_2}{v_0} + \dots\right)\ + \left(\tfrac{-v_1}{v_0}+ \tfrac{-v_2}{v_0} + \dots\right)^2  + \dots  \right).
\end{eqnarray}
Clearly, the expansion contains, for any $j$, only finitely many components of degree $j$, and is therefore a well-defined formal differential series. The kernel of the operation $D$ consists of constants, for which $v_0\in \mathbb{R}$ (or, later, $\mathbb{C}$) and all other $v_i\equiv 0$.
\begin{Comment}
\rm{
By their definition, differential polynomials are functions on $J^\infty \mathrm M^n$. To a certain extent and with a certain caution, one can work with differential series as with functions. Indeed, as explained above, one can multiply and add them, take reciprocals assuming $v_0   \ne  0$, differentiate them by $x$ (we denoted this operation by $D$ above) and Lie-differentiate with respect to a vector field.
Indeed, in all these operations, the result is still a differential series, each of whose coefficients is made from finitely many coefficients of the inputs. The operation $D$ satisfies the Leibniz rule: $D(vw)= wDv+ vDw$. Of course, unlike differential polynomials,
one cannot substitute a smooth vector-function $u(x)$ inside, as the result is not necessarily a convergent series.
}
\end{Comment}

Natural appearance of differential polynomials and differential series in our paper is related to a differential constraint (which is a system of polynomial relations on the jet bundle), which can be resolved in the class of formal differential series. Let us first give two examples.

\begin{Ex}\label{ex1}
\rm{
Consider the differential constraint in the form
$$
u = q + \frac{m}{2} q_{xx}.
$$
(As we will see in Example \ref{ex3}, this constraint comes from the Gottwald-Dullin-Holm system).
We substitute a differential series for $q$, \   $q = q_0 + q_1 + q_2 + \dots$, to get the recursion relation
$$ \begin{aligned} u & = q_0, \\ 0 & = q_1, \\ 0 & = q_{i + 2} + \frac{m}{2} (q_i)_{xx}, \quad i = 0, 1, 2, \dots. \end{aligned} $$
This recursive relation can be solved explicitly. The solution is a series $q$ whose coefficients are given by the following formulas:
\begin{equation}\label{eq:resolved} \textrm{$q_{2j + 1} = 0$ and $q_{2j} = (-1)^j \frac{m^j}{2^j} u_{x^{2j}}$ for $j = 0, 1, 2, \dots$ \ .}\end{equation}
}
\end{Ex}

The second example is of primary importance because of its direct relation to the BKM equations described below.

\begin{Ex} \label{ex2}
\rm{
Consider the differential constraint in the form
$$ 1 = m(\lambda) \Bigl( {w(\lambda)}_{xx} {w(\lambda)} - \tfrac{1}{2}\, {w(\lambda)}_x^2\Bigr) + \sigma(\lambda, u) {w(\lambda)}^2. $$
In this example $\lambda$ is a parameter;  $m(\lambda)$ is a given function (later, polynomial) in $\lambda$ only.  The function $\sigma$ depends on two variables, one of which is $\lambda$ (in later examples, the dependence on $\lambda$ will be polynomial or rational) and the other is $u(x)$.
Viewing $w(\lambda)$ as a differential series whose components depend on the parameter $\lambda$, that is, $w(\lambda) = w_0 (\lambda) + w_1(\lambda) + w_2(\lambda) + \dots$, we obtain the following relations
\begin{equation}\label{eq:relation} \begin{aligned} 1 & = \sigma(\lambda, u) w_0^2(\lambda), \\ 0 & = 2 \sigma(\lambda, u) w_0(\lambda) w_1(\lambda), \\ & \dots & \\ 0 & = m(\lambda)\Big( \sum_{j = 0}^k (w_j)_{xx} w_{k - j} - \frac{1}{2} \sum_{j = 0}^k (w_j)_x (w_{k - j})_x\Big) + \sigma(\lambda, u) \sum_{j = 1}^{k + 1} w_j w_{k +2 - j} + 2 \sigma(\lambda, u) w_0 w_{k + 2}, \end{aligned} \end{equation}
for $k = 0, 1, 2, \dots $\ .   Clearly, the first equation $1  = \sigma(\lambda, u) w_0^2(\lambda)$ can be solved with respect to $w_0$, and the last equation allows one to express $w_{k + 2}$ in terms of $\sigma(\lambda, u), m(\lambda)$ and $w_0(\lambda), \dots, w_{k + 1}(\lambda)$. Therefore, the system \eqref{eq:relation} provides a recursion formula for $w_k(\lambda)$, in which the dependence of $w_k$ on $\lambda$ is hidden in $m(\lambda)$ and $\sigma(u, \lambda)$.

Note also that, as in Example \ref{ex1}, the odd-numbered terms $w_{2i + 1}$ are identically zero. }
\end{Ex}

Following \cite{nijapp4}, by  \emph{a  formal evolutionary vector field }
$\partial_t$, we understand a derivation of the ring $\mathrm F (U^n)$, which commutes with $D$. The last property is clearly equivalent to
$$
\partial_t u^i_{x^j} = \partial_t D^j u^i = D^j\partial_t u^i,
$$
implying that any formal evolutionary vector field is uniquely defined by its action on the coordinates $u^i$. Therefore, one can write it as
\begin{equation} \label{eq:evpde}
\partial_t u^i = \xi^i [u, u_x, \dots], \quad i = 1, \dots, n,
\end{equation}
where $\xi^i \in \mathrm F (U^n)$. Note that if the series of each $\xi^i $ has a finite number of nonzero terms, \eqref{eq:evpde} is \emph{a standard evolutionary system of PDEs.} Clearly, any standard evolutionary system of PDEs is of Kovalevskaya type, implying that if its coefficients are real analytic, then for any initial local real-analytic curve $x\mapsto \hat u(x)$ there locally exists a unique real-analytic solution $u(x, t)$ with $u(x,0)= \hat u$. For formal evolutionary vector fields, a solution does not necessarily exist. Indeed, substituting $u(x,t),  \tfrac{\partial u(x,t)}{\partial x}, \tfrac{\partial^2  u(x,t)}{\partial x^2} , \dots \ $
instead of $u, u_x, u_{xx}, \dots $ in the right-hand side of the equation  $\tfrac{\partial u(x,t)}{\partial t}= \xi[u, u_x,\dots] (= \partial_t u)$ may diverge.

We call the elements of the quotient space $\mathcal F(U^n) = \mathrm F (U^n) / D(\mathrm F(U^n))$ \emph{formal functionals}. The projection map from $\mathrm F (U^n)$ to $\mathcal F (U^n)$ is denoted by $\int$. In particular, for any $F \in \mathcal F(U^n)$ there exists $f \in \mathrm F (U^n)$ such that $F = \int f$. The representation is not unique, that is, if $F = \int f = \int f'$ then $f - f' = D g$ for some $g \in \mathrm F (U^n)$. As $D$ commutes with any evolutionary equation, the derivation along this equation respects the quotient space; that is, we may view $\partial_t$ as a mapping $\partial_t : \mathcal F(U^n) \to \mathcal F(U^n)$.

We say that $f \in \mathcal F(U^n)$ is \emph{a conservation law} of $\partial_t$ if $\partial_t f = 0$ in the sense of $\mathcal F(U^n)$, i.e., if
$\partial_t f = D g
$
for a certain $g \in \mathrm F (U^n)$.

In our paper we deal with $n$-component evolutionary PDE-systems with    $s$ differential constraints. That is, we work with the PDE-systems of the form
\begin{equation}\label{eq:constraint}
\begin{aligned}
u_t & = \xi [u, u_x, \dots, u_{x^j}, q, q_x, \dots, q_{x^k}], \\
0 & = F_i (u, u_x, \dots, u_{x^l}, q, q_x, \dots, q_{x^r}), \quad i = 1, \dots, s.
\end{aligned}
\end{equation}
Here $u$ is an $n$-component vector-function and $q$ is an $s$-component vector-function. The solution of such a system $(u,q)$ consists of $  n +s $ functions of two variables $t, x$.

  We assume that the constraints can be resolved with respect to $q$ as a differential series (as we did in Examples \ref{ex1}, \ref{ex2}). This implies that each component $q^1, \dots, q^s$ of $q$ is a differential series $q^j = q^j_0 + q^j_1 + \dots$ of $u^1, \dots, u^n$.
Resolving the constraints  and substituting the results into the first  equation, we obtain a formal evolutionary vector field. If $(u,q)$ satisfies the initial equation, then $u$ satisfies the equation \eqref{eq:evpde} with this formal evolutionary vector field.

This approach has, in particular, the following advantages: for equations of the form \eqref{eq:evpde}, one can straightforwardly define the objects used in the theory of integrable systems, e.g., symmetries and conservation laws, mimicking the corresponding formulas in the case of standard evolutionary vector fields. For example, we say that two systems of the form \eqref{eq:constraint} are \emph{symmetries} of each other if their formal evolutionary vector fields commute, in a natural sense.
In naive terms, the last property means the following: take two systems of PDEs of the form \eqref{eq:constraint}, with the same $n$, but possibly different $s$, $\xi$ and $F_i$. We denote the variable ``$t$'' in the first system by $t_1$, and in the second by $t_2$, and similarly assume that the first system has $s_1$ constraints, and the second $s_2$. The variable ``$x$'' is the same in both systems.
The equations are mutual symmetries if the system of $2n+ s_1 +s_2$ equations made of both systems, but viewed as a system of $n+s_1+s_2$ unknown functions $u^1(x, t_1, t_2),\dots, u^1(x, t_1, t_2)$, $q_1^1(x, t_1, t_2),\dots,  q_2^{s_2}(x,t_1, t_2) $, is compatible in the sense that there exists no differential obstruction to the existence of solutions.

Note that the Lie bracket  of two evolutionary vector fields is again an evolutionary vector field, as the natural formula for it contains, for each degree, only finitely many terms.

\begin{Ex} \label{ex3}
\rm{
Consider the Dullin-Gottwald-Holm system \cite{gdh}
\begin{equation}\label{eq:DGH}
\begin{aligned}
u_t & = \frac{\gamma}{2} q_{xxx} + u q_x + \frac{1}{2} u_x q, \
u & = q + \frac{m}{2} q_{xx}.
\end{aligned}
\end{equation}
Here $\gamma$ and $ m$ are constants.  This is a system of two equations on two functions $u(t, x)$ and $q(t, x)$ (so $n=1=s$).
Note that for $\gamma = 0, m \neq 0$ one gets the Camassa-Holm equation, while $\gamma \neq 0, m = 0$ gives the   KdV equation (in more detail, the second equation implies $u=q$; substituting it into the first equation, we obtain the KdV equation).

The differential constraint in  \eqref{eq:DGH}  is the one considered in  Example \ref{ex1}, where we resolved it in the form of a formal differential series. Plugging this differential series into the first equation, one obtains a formal evolutionary equation of the form
\begin{equation}\label{eq:CH} u_t = \underbrace{\frac{3}{2} u u_x}_{\text{deg 1}} - \underbrace{\frac{m}{2}u u_{xx}}_{\text{deg 2}} + \underbrace{\frac{\gamma}{2}u_{xxx} - \frac{m}{2} u_x u_{xx}}_{\text{deg 3}} + \dots \ . \end{equation}
Above, as usual, dots stand for the terms of higher order.

A side unrelated observation is that, in view of \eqref{eq:CH} and from the viewpoint of formal evolutionary equations, the Dullin-Gottwald-Holm equation is a formal perturbation, in the formal parameter $m$, of
the  KdV  equation (the coefficients are different from those of \eqref{eq:KdV} but one can change the coefficients by multiplying $u$ with a constant and changing linearly
the variables $x$ and $t$).
}
\end{Ex}



\section{BKM systems and their symmetries}\label{sec_bkm}
Choose numbers  $n\in \mathbb{N}$,   $r, n_0, \dots, n_r \in \mathbb{N}\cup \{0\}$ and $l_1, \dots, l_r\in \mathbb{N}$, such that
$$
n = n_0 + n_1 + \dots + n_r \quad \text{and} \quad d:= n_0 - l_1 n_1 - \dots - l_r n_r \geq 0.
$$
Choose a polynomial $m(\mu) = m_0 + m_1 \mu  + \dots + m_d \mu^d$ of degree $\le d$.

These $2r + 1$ numbers $n_0, \dots, l_r$ and the $d + 1$ coefficients $m_0, \dots, m_d$ of the polynomial are parameters of the construction. There will be one more parameter of the construction, a number $\lambda \in \mathbb{C} \cup {\infty}$; the type I–IV of the BKM system, which will be introduced below, will depend on whether $\lambda$ is infinity or not, and whether $\lambda$ is a zero of $m$ or not.

Next, take the product of $r + 1$ discs $\mathrm{M}^n = U_0^{n_0} \times \dots \times U_r^{n_r}$ of the indicated dimensions
and consider the operator field $L$ in the form $L = L_0 \oplus \dots \oplus L_r$, where each $L_i$ is a differentially non-degenerate Nijenhuis operator on the corresponding disc.
The definition of Nijenhuis operators and differential nondegeneracy is given in \cite{nij1}. For our paper, it is sufficient to know that for a certain coordinate system
(we call it \emph{companion coordinates} and denote by c.c.) a differentially nondegenerate operator is given by the second matrix of \eqref{eq:opL}.
In the coordinate system
$$ \underbrace{u^1_0, \dots, u^{n_0}_0}_{\text{c.c. on $U_0$}}, \underbrace{u^1_{1}, \dots, u^{n_1}_{1}}_{\text{c.c. on $U_1$}}, \dots, \underbrace{u^1_{r}, \dots, u^{n_r}_{r}}_{\text{c.c. on $U_r$}} $$
``made’’ from companion coordinates on the blocks, the operator field $L$ is block-diagonal with blocks in the companion form:
\begin{equation} \label{eq:opL} L = \left(\begin{array}{cccc} L_0 & 0 & \dots & 0 \\ 0 & L_1 & \dots & 0 \\ & & \ddots & \\ 0 & 0 & \dots & L_r \\ \end{array}\right), \quad \text{where} \quad L_i = \left( \begin{array}{ccccc} u_i^1 & 1 & 0 & \dots & 0 \\ u_i^2 & 0 & 1 & \dots & 0 \\ & & & \ddots & \\ u_i^{n_i - 1} & 0 & 0 & \dots & 1 \\ u_i^{n_i} & 0 & 0 & \dots & 0 \end{array}\right), \quad i = 0, 1, \dots, r. \end{equation}

Note, however, that differential nondegeneracy does not depend on a choice of coordinates. In a different coordinate system, the operator \eqref{eq:opL} may take a very different form, so the corresponding BKM system may look very different from the one constructed using \eqref{eq:opL}. 

Next, take the following function $\sigma$ depending on both the coordinates $u$ and the parameter $\mu$.
\begin{equation}\label{eq:sigma}
\sigma(u, \mu) = \frac{\det(L_0 - { \mu } \Id)}{\left(\det(L_1 - { \mu} \Id) \right)^{l_1} \cdots \left(\det(L_r - { \mu }\Id)\right)^{l_r}}
\end{equation}

\begin{Comment}
\rm{

As we see below,
there are two closely related but visually different classes of BKM systems. The roles of $m$ and of the function $\sigma(u,\mu)$ are played by
\begin{equation}\label{eq:barsigma}
\bar \sigma(u, \mu) = (-\mu)^d\sigma\left(u, \frac{1}{\mu}\right) \quad \text{and} \quad
\bar m(\mu) = (- \mu)^d\ m\left( \frac{1}{\mu}\right).
\end{equation}

The function $\bar m$ is a polynomial in $\mu$ of degree $\le d$ and the function $\sigma$ is rational in $\mu$ with coefficients depending on $u$.

Geometrically, the parameter $\mu$ and the later introduced parameter $\lambda$ can be viewed as projective objects, and indeed the polynomial $\bar m$ is just the polynomial $m$ viewed as a homogeneous polynomial on $\mathbb{C}P^1 = \bar{\mathbb{C}}= \mathbb{C}\cup {\infty}$ written in the chart containing $\infty$ in the coordinate $1/\mu$. In particular, we say that $\mu = \infty$ is \emph{a root of} $m(\lambda)$ if $\bar m(0) = 0$. Each class of BKM systems is divided into two subclasses, depending on whether $\lambda$ is a root of the polynomial $m$ and $\bar m$.}
\end{Comment}
Now consider the vector field $\zeta_0$ on $\mathrm M^n$, defined by the relation
\begin{equation} \label{eq:zeta}
\mathcal L_{\zeta_0} \det(L_0 - \mu \Id) = 1, \quad \mathcal L_{\zeta_0} \det (L_i - \mu \Id) = 0, \quad i = 1, \dots, r .
\end{equation}
Above, $\mathcal L$ denotes the Lie derivative. As the coefficients of the characteristic polynomial of $L$ are functionally independent,
\eqref{eq:zeta} uniquely determines $\zeta_0$.
Set $\zeta = p(L) \zeta_0$, where $$p(t) = (\det(L_1 - t \Id))^{l_1} \dots (\det(L_r - t \Id))^{l_r} m(t).$$

\begin{Ex}
Assume $r=0$ and take $L=L_0$ in the companion form \eqref{eq:opL}. Then,
\begin{equation}\label{eq:zetaexplicit} \zeta_0 = (-1)^{n+1} \begin{pmatrix} 0 \\ \vdots \\ 0 \\ 1 \end{pmatrix}\ , \quad \zeta = (-1)^{n+1} \begin{pmatrix} m_{n-1} + m_n u^1 \\ m_{n-2} + m_n u^2 \\ \vdots \\ m_0 + m_n u^n \end{pmatrix}. \end{equation}
\end{Ex}
The BKM system is the following family of $n$-component evolutionary PDE systems with one constraint:
\begin{equation}\label{BKM} \begin{aligned} u_{t_\lambda} & = {q(\lambda)}_{xxx} \Bigl(L - \lambda \operatorname{Id}\Bigr)^{-1}\zeta + {q(\lambda)} \Bigl(L - \lambda \operatorname{Id}\Bigr)^{-1}u_x, \\ 1 & = m(\lambda) \Bigl( {q(\lambda)}_{xx} {q(\lambda)} - \tfrac{1}{2}\, {q(\lambda)}_x^2\Bigr) + \sigma(\lambda, u) {q(\lambda)}^2. \end{aligned} \end{equation}
Above, $\lambda$ is a parameter which parametrises the systems within the family.
Note that there is a natural gauge equivalence on the class of systems \eqref{BKM}: changing $\lambda$ to $\lambda+\lambda_0$ and the polynomial $m(\lambda)$ to $m(\lambda-\lambda_0)$ gives the same BKM system. Moreover,
as mentioned above, the parameter $\lambda$ can be viewed as a projective parameter and we allow it to take the value $\infty$. The coordinate in a neighborhood of infinity (in $\bar{\mathbb{C}}= \mathbb{C}\cup {\infty} $) is $\mu =  \tfrac{1}{\lambda}$, and in this
coordinate the BKM system has the following form:
\begin{equation}\label{BKMinf} \begin{aligned} u_{t_\mu} & = {\bar q(\mu)}_{xxx} \Bigl(\operatorname{Id} - \mu L\Bigr)^{-1}\zeta + {\bar q(\mu)} \Bigl(\operatorname{Id} - \mu L\Bigr)^{-1}u_x, \\ 1 & = \bar m(\mu) \Bigl( {\bar q(\mu)}_{xx} {\bar q(\mu)} - \tfrac{1}{2}\, {\bar q(\mu)}_x^2\Bigr) + \bar \sigma(\mu, u) {\bar q(\mu)}^2. \end{aligned} \end{equation}

  Let us explain that the system \eqref{BKM} coincides with \eqref{BKMinf} for $\mu= \tfrac{1}{\lambda}$ (after an appropriate reparameterization of the time). The point $\mu=0$ is special and corresponds to $\lambda=\infty $ in \eqref{BKM}.

Substituting $\lambda=  1/\mu  $ in the first equation of
\eqref{BKM} gives

$$ \mu^{d/2-1} u_{t}  =  -\mu^{d/2}\left({q(\lambda)}_{xxx} \left(\operatorname{Id}  - \mu L \right)^{-1}\zeta + {q(\lambda)} \left(\operatorname{Id}  - \mu L \right)^{-1}u_x\right).$$
After the rescaling  of time and of the  function $q$ by the formula    \begin{equation}\label{eq:transition}  t_\mu=  \mu^{d/2-1}t  \  \textrm{   and    }  \  
\bar q(\mu)=-\mu^{d/2} q(1/\mu), \end{equation} The equation \eqref{BKM} gives \eqref{BKMinf}.

Let us comment on the notation $t_\lambda, t_\mu$ in  (\ref{BKM}, \ref{BKMinf}) above. In formulas \eqref{BKM} and
\eqref{BKMinf} we assume that $\lambda$ and $\mu$ are fixed. The variable $t_\lambda$ (resp. $t_\mu)$ is just
the ``time'' $t$, which we considered above.
Similarly, the notation $q(\lambda)$ and $\bar q(\mu)$ does not carry any special meaning in \eqref{BKM} and \eqref{BKMinf},
as $\lambda$ and $\mu$ are fixed, so $q(\lambda)$ and $\bar q(\mu)$ are just unknown functions of $(x,t_\lambda)$ and of $(x, t_\mu)$.
The reason for introducing the subscript $\lambda$ (resp., $\mu$) and the dependence $q(\lambda)$ on $\lambda$ and $\bar q(\mu)$ on $\mu$ will become clear below,
see Facts \ref{fact:1} and \ref{fact:2}: Fact \ref{fact:1} shows that for functions $q_1= q(\lambda_1)$ and $q_2= q(\lambda_2)$ the corresponding evolutionary equations with constraint are mutual symmetries, and Fact \ref{fact:2} shows how to produce commutative symmetries, of increasing order, of the BKM equations.

{   By \emph{BKM I} system we understand the system   \eqref{BKM} such that $\lambda$ is not a root of the polynomial $m$.  In the case when $\lambda$ is a root of $m$,   the second equation  of \eqref{BKM} reads
$q = \sqrt{\sigma(\lambda, u)}$. Plugging this into the first equation of \eqref{BKM}, we obtain the following evolutionary system (no constraint!) of $n$ PDEs:
\begin{equation}\label{BKM:II}
 u_{t_\lambda}  = \left(\tfrac{1}{\sqrt{ \sigma(\lambda, u)}} \right)_{xxx}\Bigl(L - \lambda \operatorname{Id}\Bigr)^{-1}\zeta + \tfrac{1}{\sqrt{ \sigma(\lambda, u)}} \Bigl(L - \lambda \operatorname{Id}\Bigr)^{-1}u_x.
\end{equation}
We will call the system \eqref{BKM:II}  \emph{BKM II system}.

As was shown in \cite{nijapp4}, the BKM systems are integrable in the sense that they possess an infinite family of symmetries (and also of conservation laws) of increasing order. The main observation leading to this is that the symmetries of the system \eqref{BKM} are just the system itself, but with other values of the parameters:

\begin{Fact}[\cite{nijapp4}]  \label{fact:1}
For any two values $\lambda_1, \lambda_2$ of $\lambda$, the corresponding systems \eqref{BKM} are symmetries of each other.
\end{Fact}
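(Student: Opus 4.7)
The plan is to reduce the statement to a direct computation with formal evolutionary vector fields. Resolving the constraint in \eqref{BKM} as in Example \ref{ex2} expresses $q(\lambda)$ as a formal differential series in the derivative coordinates $u^i_{x^j}$, whose coefficients depend rationally on $\lambda$. Substituting this series into the first line of \eqref{BKM} turns the BKM system at parameter $\lambda$ into a genuine formal evolutionary vector field $\partial_{t_\lambda}$ on $\mathrm{M}^n$. The claim of the fact is then equivalent to $[\partial_{t_{\lambda_1}}, \partial_{t_{\lambda_2}}] = 0$ in the sense of formal evolutionary vector fields; since both sides are rational in $\lambda_1, \lambda_2$, it suffices to verify this on the open set where $\lambda_1 \neq \lambda_2$ and neither is a root of $m$.

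To carry out the verification I would compute $\partial_{t_{\lambda_2}}\bigl(\partial_{t_{\lambda_1}} u\bigr)$ and antisymmetrise in $\lambda_1 \leftrightarrow \lambda_2$. Two structural identities should collapse the resulting expression. The first is the resolvent identity
$$(L - \lambda_1 \Id)^{-1} - (L - \lambda_2 \Id)^{-1} \;=\; (\lambda_1 - \lambda_2)\,(L - \lambda_1 \Id)^{-1}(L - \lambda_2 \Id)^{-1},$$
which manifestly symmetrises the difference of the leading operator-valued factors in $\partial_{t_{\lambda_1}} u$ and $\partial_{t_{\lambda_2}} u$. The second is obtained by differentiating the constraint at $\lambda_1$ along the flow $\partial_{t_{\lambda_2}}$: this yields a linear equation for $\partial_{t_{\lambda_2}} q(\lambda_1)$ which, combined with its counterpart for $\partial_{t_{\lambda_1}} q(\lambda_2)$, produces an explicitly symmetric two-parameter expression. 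The Nijenhuis identity for $L$, together with the defining relation $\zeta = p(L)\zeta_0$ and \eqref{eq:zeta}, then supplies the algebraic rules needed to convert the remaining operator products into a manifestly symmetric form. A structurally cleaner parallel route is to try to write the $\partial_{t_\lambda}$ as Hamiltonian flows for the compatible Poisson pencils of \cite{nijapp2,nijapp3}; if such a Hamiltonian description is available, a Magri--Lenard recursion yields commutativity with essentially no calculation.

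The main obstacle is the bookkeeping. Because both $q(\lambda)_{xxx}$ and $q(\lambda)$ appear in \eqref{BKM}, paired with two different operator-valued fields $(L-\lambda\Id)^{-1}\zeta$ and $(L-\lambda\Id)^{-1}u_x$, differentiating the first flow along the second produces cross-terms involving derivatives of $q(\lambda_1)$ up to order six, mixed products $q(\lambda_1)^{(i)} q(\lambda_2)^{(j)}$ of many shapes, and various commutators of $L$ with $\mathcal{L}_\zeta L$. Identifying the correct grouping under which the antisymmetric part vanishes modulo the two constraints is the delicate combinatorial step, and it is the content of the calculation executed in \cite{nijapp4}, to which I would defer for the detailed verification.
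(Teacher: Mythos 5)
The paper does not actually prove this statement: it is presented as a \emph{Fact} imported wholesale from \cite{nijapp4}, so there is no in-paper argument to compare yours against. Judged on its own terms, your proposal is a sensible plan but not a proof. The framing is right --- resolve the constraint as in Example \ref{ex2} to turn each system into a formal evolutionary vector field, reduce the claim to $[\partial_{t_{\lambda_1}},\partial_{t_{\lambda_2}}]=0$, and use the resolvent identity together with the $x$- and $t$-derivatives of the two constraints to control $\partial_{t_{\lambda_2}}q(\lambda_1)$. But the single step that constitutes the theorem, namely showing that the antisymmetrised expression vanishes modulo the constraints, is exactly the step you decline to carry out and defer back to \cite{nijapp4}. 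Since the statement under review \emph{is} that citation, this is circular: nothing has been verified that was not already being asserted.

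Two further points. First, the $\lambda$-dependence of the resolved series is not rational: the leading term is $q_0=\sigma(\lambda,u)^{-1/2}$, so $\partial_{t_\lambda}u$ depends on $\lambda$ only algebraically, through a square root. Your continuation argument from the open set where $\lambda_1\neq\lambda_2$ and $m(\lambda_i)\neq 0$ survives (analytic dependence on a connected domain suffices), but as stated the word ``rational'' is incorrect and also hides a sign/branch choice for $q_0$. Second, the ``structurally cleaner parallel route'' you mention only in passing --- realizing the flows as a bi-Hamiltonian family for the compatible Poisson pencils of \cite{nijapp2,nijapp3} and invoking the Magri--Lenard scheme --- is essentially how the cited source actually obtains commutativity; if you intend to produce a proof rather than a plan, that is the branch worth developing, since it replaces the six-derivative bookkeeping you correctly identify as the obstacle with a general argument.
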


Assume now $\mu = 0$ (  $   \Longleftrightarrow \ \lambda=\infty$). In this case the second equation of \eqref{eq:barsigma} gives $\bar \sigma = 1$. Then, the constraint in \eqref{BKMinf} reads

\begin{equation}\label{eq:mu0}1 = \bar m(0) \Bigl( {\bar q(0)}_{xx} {\bar q(0)} - \tfrac{1}{2}\, {\bar q(0)}_x^2\Bigr) + {\bar q(0)}^2.\end{equation}
 This ODE  implies $\bar q= \textrm{const}= 1$: indeed, differentiating it by $x$  we obtain 
  $$  \bar m(0)  \bar q_{xxx} \bar q  +  2 \bar q_x  \bar q = 0  \  \Longrightarrow  \  \bar m(0) \bar q_{xx} +   \bar q{=} \const   \   {\Longrightarrow}  \   -\bar q_x\bar q  + 2 \bar q_x \bar q=0         \   \stackrel{\eqref{BKMinf}}{\Longrightarrow}  \ 
 \bar q= \const =1.$$

As $\bar q=1$, the first equation of \eqref{BKMinf} is just $u_{t_\infty}=  u_x$ and is not very interesting. It is still, of course, a symmetry of \eqref{BKM} for every $\lambda$.

We see that the systems with different values of the parameter $\lambda$, including $\lambda=\infty$, provide a large family of mutual symmetries. In most publications on integrable systems, one equation or one system of equations is selected and one looks for mathematical methods that allow one to analyse it, in particular those coming from the theory of integrable systems, e.g., hierarchies of commuting symmetries of increasing order and of conservation laws of increasing order. This selection may be explained as follows: one starts with a system which describes a physically interesting phenomenon and then looks for methods, in particular for those coming from the theory of integrable systems, to study it.
This is one reason why in many interesting and physically relevant examples commuting symmetries and conservation laws come in a hierarchy. An additional advantage of passing to hierarchies is as follows: for most $\lambda$, the BKM systems are systems with a differential constraint. As a system of commutative symmetries we can choose a hierarchy of evolutionary systems with no constraint. As will be clear below, such hierarchies correspond to roots of the polynomial $m$. Note that even the case $m(\lambda)= m_0= \const\ne 0$ can be treated in this way, as in this case $m(\infty)= \bar m(0)=0$, and the corresponding hierarchy at $\lambda=\infty$ still provides a system of evolutionary equations with no constraint.

Moreover, the first terms of hierarchies of the BKM system at $\lambda= \infty$, i.e., the system \eqref{BKMinf} at $\mu=0$, (BKM III and BKM IV, see below) contain a large family of interesting evolutionary integrable systems with constraints and with no constraint, containing previously studied integrable systems including the KdV, the Camassa-Holm and the Kaup–Boussinesq equations.

As proved in \cite{nijapp4}, we obtain a hierarchy of commuting symmetries of increasing order by the following procedure:

\begin{enumerate}
    \item Choose $\lambda_0\in \mathbb{C} \cup \{\infty\}$. 
    \item If  $\lambda_0\ne \infty$,  consider the formal decomposition, in the formal parameter $\varepsilon$,  of the left-hand side
    \begin{equation}\label{eq:partialt}
    \partial_{t_{\lambda_0 + \varepsilon}} u = \partial_{t_{\lambda_0, 0}}u + \varepsilon \partial_{t_{\lambda_0, 1}}u + \varepsilon^2 \partial_{t_{\lambda_0, 2}}u + \dots.
 \end{equation}
{\      For $\lambda = \infty$ ($\Longleftrightarrow  \   \mu=0$), we consider the following formal decomposition: 
 \begin{equation}
 \label{eq:partialtinfty}\partial_{t_{\infty+ \varepsilon}} u =  \partial_{t_{\infty, 0}}u + \varepsilon \partial_{t_{\infty, 1}}u + \varepsilon^2 \partial_{t_{\infty, 2}}u + \dots \ .\end{equation}}
    \item Consider the following formal decomposition of the  the constraint:
$$
    \begin{aligned}
     1 & = m(\lambda_0 + \varepsilon) \Bigl( {q(\lambda_0+\varepsilon)}_{xx} {q(\lambda_0 + \varepsilon)} - \tfrac{1}{2}\, {q(\lambda_0 + \varepsilon)}_x^2\Bigr) + \sigma(\lambda_0 + \varepsilon, u) {q(\lambda_0 + \varepsilon)}^2 = \\
     & =  Q_0 + \varepsilon Q_1 + \varepsilon^2 Q_2 + \dots  
    \end{aligned}
    $$
    and the corresponding decomposition at $\mu=0$
    $$
    \begin{aligned}
     1 & = \bar m(\varepsilon) \Bigl( {\bar q(\varepsilon)}_{xx} {\bar q(\varepsilon)} - \tfrac{1}{2}\, {\bar q(\varepsilon)}_x^2\Bigr) + \bar \sigma(\varepsilon, u) {\bar q( \varepsilon)}^2 = \\
     & = \bar Q_0 + \varepsilon \bar Q_1 + \varepsilon^2 \bar Q_2 + \dots.
    \end{aligned}
    $$
  Coefficients of the series in $\varepsilon$ of $q(\varepsilon)$ and of $\bar q(\varepsilon)$ are formal series. Consequently, $Q_i$ and $\bar Q_i$ are formal series. However, if $\lambda_0$ is a root of the polynomial $m(\lambda)$ or $\bar m(0)=0$, then the resulting coefficients are differential polynomials.

\item   Take the BKM equation \eqref{BKM} (\eqref{BKMinf}, resp.), substitute $\lambda_0+\varepsilon$ for $\lambda$, replace $t_{\lambda_0+\varepsilon}$ by \eqref{eq:partialt} (resp., replace $t_{\infty+\varepsilon}$ by \eqref{eq:partialtinfty}), and equate the coefficients of the same power in $\varepsilon$. Equating the $i$th power in $\varepsilon$ gives us a system of the following form
    $$ 
    \partial_{\lambda, i}  = \xi[u, q_0, \dots, q_i]    \  \ \ (\textrm{resp., \   \ }  \partial_{\infty, i}  = \xi[u, \bar q_0, \dots, \bar q_i]  )
    $$
    Next, add to the system $i+1$ constraints $Q_0=1,\ Q_1=0,\ \dots,\ Q_i=0$ (resp., $\bar Q_0=0,\ \bar Q_1=0,\ \dots,\ \bar Q_i=0$). For each $i$, we obtain a formal evolutionary equation with $i+1$ constraints. However, again, if $\lambda_0$ is a root of $m$ (resp., $\bar m(0)=0$), then it is a standard evolutionary equation with no constraints.
    
\end{enumerate}

}
\begin{Fact}[\cite{nijapp4}] \label{fact:2}
For every $i_1, i_2$, the systems above are mutual symmetries. Moreover, all of them are symmetries of the BKM system \eqref{BKM} and its barred version \eqref{BKMinf}.
\end{Fact}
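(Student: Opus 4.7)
The plan is to derive Fact \ref{fact:2} from Fact \ref{fact:1} by formal Taylor expansion in the spectral parameter around $\lambda_0$. Once the constraint in \eqref{BKM} is resolved as in Example \ref{ex2}, the flow $\partial_{t_\lambda}$ becomes a formal evolutionary vector field whose coefficients depend rationally on $\lambda$. Fact \ref{fact:1} asserts that for any two values $\lambda_1,\lambda_2$ the commutator $[\partial_{t_{\lambda_1}},\partial_{t_{\lambda_2}}]$ vanishes as such a vector field. The strategy is to regard this as an identity in two parameters and to read off its Taylor coefficients at $(\lambda_0,\lambda_0)$.

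First I would check that substituting $\lambda=\lambda_0+\varepsilon$ into the constraint yields a well-defined formal series $q(\lambda_0+\varepsilon)=q_0+\varepsilon q_1+\varepsilon^2 q_2+\dots$ whose coefficients $q_i$ are formal differential series in $u$ (and genuine differential polynomials when $\lambda_0$ is a root of $m$, cf.\ \eqref{BKM:II}). This is exactly the recursion of Example \ref{ex2}: the leading equation $\sigma(\lambda_0,u)q_0^2=1$ determines $q_0$, and each subsequent relation $Q_k=0$ expresses $q_k$ in terms of $u$, $q_0,\dots,q_{k-1}$ and the $\varepsilon$-Taylor coefficients of $m(\lambda_0+\varepsilon)$ and $\sigma(\lambda_0+\varepsilon,u)$. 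Plugging this series into the first equation of \eqref{BKM} and collecting powers of $\varepsilon$ produces the vector fields $\partial_{t_{\lambda_0,i}}$ of \eqref{eq:partialt}, and the attached constraints $Q_0=1,Q_1=0,\dots,Q_i=0$ are precisely what is needed so that the truncated system, after resolution, reproduces the $i$-th Taylor coefficient of the original resolved flow. The case $\lambda_0=\infty$ is treated identically in the barred chart $\mu=1/\lambda$ via \eqref{BKMinf} and \eqref{eq:partialtinfty}.

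The remainder is formal bookkeeping. Fix $\lambda_0$ and introduce independent formal parameters $\varepsilon_1,\varepsilon_2$. Viewing Fact \ref{fact:1} as an identity with coefficients rational in $\lambda$, it remains valid after the substitution $\lambda_j=\lambda_0+\varepsilon_j$ as a formal power series identity
\[
\bigl[\partial_{t_{\lambda_0+\varepsilon_1}},\partial_{t_{\lambda_0+\varepsilon_2}}\bigr]=0.
\]
Expanding both entries and using bilinearity of the Lie bracket, which preserves the class of evolutionary vector fields, rewrites the left-hand side as $\sum_{i_1,i_2}\varepsilon_1^{i_1}\varepsilon_2^{i_2}[\partial_{t_{\lambda_0,i_1}},\partial_{t_{\lambda_0,i_2}}]$, and equating coefficients proves the first assertion. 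Taking instead $\lambda_2=\lambda$ arbitrary but fixed and $\lambda_1=\lambda_0+\varepsilon$ shows that each $\partial_{t_{\lambda_0,i}}$ commutes with $\partial_{t_\lambda}$, yielding the second. The main obstacle is the transition from \emph{commutativity for all numerical values of} $\lambda_1,\lambda_2$ to \emph{commutativity as a formal series in} $\varepsilon_1,\varepsilon_2$: one must verify that the coefficients of $\partial_{t_\lambda}$ are rational in $\lambda$ with poles arising only from zeros of $\sigma(\lambda,u)$ or from the denominator clearing $m$, and that $\lambda_0$ avoids these poles---or, in the root-of-$m$ case, that the apparent singularity is cancelled by the resolved constraint. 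Modulo this analyticity-in-parameter check, Fact \ref{fact:2} is a formal corollary of Fact \ref{fact:1}.
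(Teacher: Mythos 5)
The paper does not actually prove Fact~\ref{fact:2}: like Fact~\ref{fact:1}, it is quoted from \cite{nijapp4} without an argument, so there is no internal proof to compare yours against. That said, your derivation of Fact~\ref{fact:2} from Fact~\ref{fact:1} by double Taylor expansion at $(\lambda_0,\lambda_0)$ is the natural route and is essentially sound: degree by degree, each homogeneous component of the commutator $[\partial_{t_{\lambda_1}},\partial_{t_{\lambda_2}}]$ is a differential polynomial whose coefficients depend analytically on $(\lambda_1,\lambda_2)$ near $(\lambda_0,\lambda_0)$, so vanishing for all numerical values forces all Taylor coefficients to vanish, and bilinearity of the bracket identifies those coefficients with $[\partial_{t_{\lambda_0,i_1}},\partial_{t_{\lambda_0,i_2}}]$. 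Your identification of the constraints $Q_0=1,\ Q_1=0,\dots$ with the $\varepsilon$-Taylor coefficients of the resolved constraint, so that the hierarchy flows really are the Taylor coefficients of the resolved flow, is the right reduction and matches the construction in Section~\ref{sec_bkm}.

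One imprecision worth fixing: the coefficients of the resolved flow are not rational in $\lambda$. The leading term of the resolved constraint gives $q_0=\sigma(\lambda_0,u)^{-1/2}$ (cf.\ Example~\ref{ex2} and \eqref{BKM:II}), so the dependence on the parameter is algebraic, involving a branch of $\sqrt{\sigma(\lambda,u)}$, composed with the rational factors $(L-\lambda\Id)^{-1}$ and $1/m(\lambda)$. This does not damage the argument --- analyticity in $\lambda$ on a neighbourhood of $\lambda_0$ avoiding zeros of $\sigma(\lambda_0,u)$ and, in the non-root case, of $m$, is all that the identity-theorem step requires, and a consistent branch choice propagates through the recursion --- but the "analyticity-in-parameter check" you defer should be phrased in these terms rather than in terms of rationality. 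With that adjustment the proposal is a complete reduction of Fact~\ref{fact:2} to Fact~\ref{fact:1}.
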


\begin{Ex} {\rm \label{ex:3.2}
Let us calculate the first two terms of the hierarchy at $\lambda=\infty$ ($\Longleftrightarrow\ \mu=0$). We clearly have
$$\bar m(\varepsilon) =(-1)^d ( m_d + m_{d-1}\varepsilon + \dots )\ , \ \ \bar \sigma(\varepsilon)= 1 - \varepsilon
(\tr(L_0)- \sum_{i=1}^rl_i\tr(L_i))+ \dots \ , \ \bar q= \bar q_0 + \varepsilon \bar q_1 + \dots \ , 
$$ 
where the dots denote higher-order terms in $\varepsilon$. Substituting this into the second equation of \eqref{BKMinf}, we see that the coefficients of the zeroth-order terms in $\varepsilon$ yield the equation
\begin{equation}\label{BKMIIItmpa}
1 = m_d({{({\bar q}_0})}_{xx}{\bar q_0} 
- \tfrac{1}{2} (({\bar q}_0)_x)^2)+ ({\bar q}_0)^2,   
\end{equation}
We already discussed this equation in \eqref{eq:mu0} and saw that it implies $\bar q_0 = 1$.
In particular, we have $\bar q = 1 + \bar q_1 \varepsilon + \dots$ . Moreover, in view of $\bar q_0 = 1$, the first equation of \eqref{BKMinf} gives us $u_x = u_{t_\infty}$.
Next, let us equate the terms of first degree in $\varepsilon$: we obtain
$$0  = 2 \bar q_1  +  (-1)^d m_d (\bar q_1)_{xx} - \tr L_0 + \sum_{i=1}^r l_i \tr L_i.$$
This is the constraint in the first-degree term of the hierarchy. Substituting $q=1+\varepsilon \bar q_1$ and $(\Id-\varepsilon L)^{-1}=(\Id+\varepsilon L+\dots)$ into the first equation of \eqref{BKMinf}, we obtain
\begin{equation}\label{BKMIIItmp}
u_{t_{1,\infty}}= (\bar q_1)_{xxx}(\Id + \varepsilon L + \dots) \zeta +  (L  + \bar q_1 \Id) u_x. 
\end{equation}
}
    \end{Ex}
Next, depending on whether $\mu=0$ is a root of $\bar m$, formulas \eqref{BKMIIItmpa} and \eqref{BKMIIItmp} can be naturally divided into the following two cases, referred to as BKM III and BKM IV.

\textbf{BKM III}. Assume that $\infty$ is not a root of the polynomial $m$, that is, assume that $m_d\ne 0$. Then \eqref{BKMIIItmpa} and \eqref{BKMIIItmp} give us the following system with a constraint:
\begin{equation}
\label{eq:BKM3}
\begin{aligned}
u_{t_\infty} &= q_{xxx} \zeta + (L + q\,\Id) u_x,\\
0 & = 2q  +  (-1)^d m_d q_{xx} - \tr L_0 + l_1 \tr L_1 + \dots + l_r \tr L_r.
\end{aligned}
\end{equation}
 
\textbf{BKM IV}. Now take $\lambda=\infty$ ($\Longleftrightarrow\ \mu=0$) and assume $\bar m(0)=0$. In this case, the second equation of \eqref{eq:BKM3} gives
$q=\frac{1}{2}\bigl(\tr L_0-l_1\tr L_1-\dots-l_r\tr L_r\bigr)$,
and we obtain
$$
u_{t_\infty} = \tfrac{1}{2} (\tr L_0 - l_1 \tr L_1 - \dots - l_r \tr L_r)_{xxx} \, \zeta + \left(L + \tfrac{1}{2} (\tr L_0 - l_1 \tr L_1 - \dots - l_r \tr L_r) \cdot \Id\right)\, u_x.
$$
This is an $n$-component evolutionary equation (with no constraints). For $r=0$, we obtain the following simpler equation
$$u_{t_\infty} = \tfrac{1}{2} (\tr L)_{xxx}  \zeta + \left(L + \tfrac{1}{2} \tr L \cdot \Id\right) u_x.$$

\begin{Ex}\label{ex:3.3}
Consider BKM IV systems corresponding to the case $n=1$, $r=0$, and $m=\textrm{const}=1$. In this case, $\sigma(\lambda)=u-\lambda$, so $\zeta_0=\zeta=\tfrac{\partial}{\partial u}$ (i.e., its only component equals $1$), and $\bar\sigma(\mu)=-\mu\left(u-\tfrac{1}{\mu}\right)=1-\mu u$, and we obtain the  KdV equation \eqref{eq:KdV}:
$$
u_{t_\infty} = \tfrac{1}{2} u_{xxx}  +\tfrac{3}{2} u u_x.
$$
\end{Ex}

\begin{Comment}
\rm{
The construction above does not distinguish between evolutionary and formal evolutionary equations. This allows us to find (usual) evolutionary symmetries for formal evolutionary systems, and for evolutionary systems with constraints. For example, the Dullin--Gottwald--Holm and Camassa--Holm equations are BKM III equations with $m(\lambda)$ of the first degree. Therefore, they are evolutionary equations with a constraint. If we choose a root of $m$ as $\lambda_0$, we obtain evolutionary symmetries with no constraints, which are symmetries for both the Dullin--Gottwald--Holm and Camassa--Holm systems. We also see that the Dullin--Gottwald--Holm and Camassa--Holm systems are symmetries of each other (of course, this fact was known, in different terms, before).

}    
\end{Comment}



\section{Parametric form of BKM systems and the corresponding  Lax pairs}\label{sec4}
A Lax pair representation is essentially a rewriting of a system of ODEs or (evolutionary or formal evolutionary)  PDEs  into the form \eqref{eq:Lax}. We proceed as follows: Theorem \ref{t4}  rewrites  the initial BKM systems (\ref{BKM}, \ref{BKMinf}) in another equivalent form, for which we will construct the Lax pair in Theorem \ref{t5}.

\begin{Theorem}\label{t4}
Consider a BKM system, and let $\sigma(\mu), \bar\sigma(\mu)$ be the functions given in \eqref{eq:sigma} and \eqref{eq:barsigma}. Then the following statements hold:
\begin{enumerate}
    \item
    For every $\lambda\ne\infty$, the BKM system \eqref{BKM} is equivalent to the parametric system
     \begin{equation}\label{eq:parametricform1}
    \begin{aligned}
    \partial_{t_\lambda} \sigma(\mu) & = \frac{1}{\mu - \lambda} \Bigg( m(\mu) q_{xxx} (\lambda) + 2 \sigma(\mu) q_x(\lambda) + \sigma_x(\mu) q(\lambda)\Bigg), \\
    1 & = m(\lambda) \Bigg( q_{xx}(\lambda) q(\lambda) - \frac{1}{2} q^2_x(\lambda)\Bigg) + \sigma(\lambda) q^2(\lambda) 
    \end{aligned}
    \end{equation}
    for all  $\mu$.
    \item For every $\lambda\ne\infty$, the BKM system \eqref{BKMinf} is equivalent to the parametric system
 \begin{equation}\label{eq:parametricform2}
    \begin{aligned}
    \partial_{t_\lambda} \bar \sigma(\mu) & = \frac{ \mu}{\mu - \lambda} \Bigg( \bar m(\mu) \bar q_{xxx} (\lambda) + 2 \bar \sigma(\mu) \bar q_x(\lambda) + \bar \sigma_x(\mu) \bar q(\lambda)\Bigg), \\
    1 & = \bar m(\lambda) \Bigg( \bar q_{xx}(\lambda) \bar q(\lambda) - \frac{1}{2} \bar q^2_x(\lambda)\Bigg) + \bar \sigma(\lambda) \bar q^2(\lambda) 
    \end{aligned}
   \end{equation}
    for all $\mu$.
\end{enumerate}
\end{Theorem}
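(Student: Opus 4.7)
The constraint (second equation) in the parametric form \eqref{eq:parametricform1} is identical to the BKM constraint, so it suffices to show that the two first equations are equivalent modulo that common constraint. The plan is to substitute the BKM evolution into $\partial_{t_\lambda}\sigma(\mu)=\nabla\sigma(\mu)\cdot u_{t_\lambda}$ and verify that the result equals the RHS of \eqref{eq:parametricform1}. The converse direction is automatic: the coefficients of $\sigma(\mu)$, viewed as a rational function of $\mu$, recover all coordinates $u^j_i$ (the numerator/denominator factors $\det(L_i-\mu\Id)$ are monic polynomials whose coefficients are precisely the $u$'s in companion coordinates), so the family $\{\partial_{t_\lambda}\sigma(\mu)\}_\mu$ determines $u_{t_\lambda}$ uniquely.

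\textbf{Key identities.} Substituting the first equation of \eqref{BKM} yields
\[
\partial_{t_\lambda}\sigma(\mu) = q_{xxx}(\lambda)\,A(\mu,\lambda) + q(\lambda)\,B(\mu,\lambda),
\]
where $A(\mu,\lambda):=\nabla\sigma(\mu)\cdot(L-\lambda\Id)^{-1}\zeta$ and $B(\mu,\lambda):=\nabla\sigma(\mu)\cdot(L-\lambda\Id)^{-1}u_x$. The algebraic heart of the proof is to establish the two rational-in-$\mu$ identities
\[
A(\mu,\lambda) = \frac{m(\mu)\sigma(\lambda)-m(\lambda)\sigma(\mu)}{\sigma(\lambda)(\mu-\lambda)}, \qquad
B(\mu,\lambda) = \frac{\sigma_x(\mu)\sigma(\lambda)-\sigma_x(\lambda)\sigma(\mu)}{\sigma(\lambda)(\mu-\lambda)}.
\]
I would prove these in companion coordinates, exploiting that $\zeta_0$ is supported in the $L_0$-block, $\zeta=p(L_0)\zeta_0$ with $p(t)=\prod_i(\det(L_i-t\Id))^{l_i}m(t)$, and that $\nabla\sigma(\mu)$ is polynomial in $\mu$ scaled by $\prod_i(\det(L_i-\mu\Id))^{-l_i}$. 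The core inputs are an explicit formula for $(L_0-\lambda\Id)^{-1}\zeta_0$ (computable since $\zeta_0$ is the last companion basis vector of $L_0$), Cayley--Hamilton reduction of $p(L_0)$ modulo $\det(L_0-t\Id)$, and the Nijenhuis identity $\mathcal{L}_{(L-\lambda\Id)^{-1}X}L=(L-\lambda\Id)^{-1}\mathcal{L}_X L$; the latter delivers (B) when combined with $\mathcal{L}_{u_x}\sigma(\mu)=\sigma_x(\mu)$.

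\textbf{Completion and Part 2.} Substituting (A) and (B) and splitting into partial fractions in $\mu$ gives
\[
\partial_{t_\lambda}\sigma(\mu) = \frac{m(\mu)q_{xxx}(\lambda)+\sigma_x(\mu)q(\lambda)}{\mu-\lambda} - \frac{\sigma(\mu)\bigl[m(\lambda)q_{xxx}(\lambda)+\sigma_x(\lambda)q(\lambda)\bigr]}{\sigma(\lambda)(\mu-\lambda)}.
\]
Differentiating the BKM constraint in $x$ produces $q(\lambda)\bigl[m(\lambda)q_{xxx}(\lambda)+2\sigma(\lambda)q_x(\lambda)+\sigma_x(\lambda)q(\lambda)\bigr]=0$, so at points with $q(\lambda)\ne 0$ the bracket in the second term equals $-2\sigma(\lambda)q_x(\lambda)$; the second term collapses to $2\sigma(\mu)q_x(\lambda)/(\mu-\lambda)$, and adding gives exactly \eqref{eq:parametricform1}. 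Part 2 follows by the same argument with bars on every symbol, or equivalently by the change of variables $\mu\leftrightarrow 1/\mu$, $\lambda\leftrightarrow 1/\lambda$ from \eqref{eq:barsigma}--\eqref{eq:transition}; the extra factor $\mu$ in \eqref{eq:parametricform2} is the Jacobian arising from the substitution $(\Id-\lambda L)^{-1}=-\lambda^{-1}(L-\lambda^{-1}\Id)^{-1}$. The main obstacle is (A) and (B) themselves: for $r=0$ they are a single-block Cayley--Hamilton computation on a companion matrix, but for $r\ge 1$ the bookkeeping is delicate because the factors $\det(L_i-L_0\Id)^{l_i}$ in $p(L_0)$ mix the block coordinates; the crucial point to verify is that the $L_i$-block components ($i\ge 1$) of $\nabla\sigma(\mu)$ contribute zero when paired with $\zeta$, so that the whole computation localizes to the $L_0$-block with an explicit scalar prefactor in $\mu$.
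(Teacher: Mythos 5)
Your proposal is correct and follows the same architecture as the paper's proof: reduce everything to the two pairing identities for $\ddd\sigma(\mu)$ against $(L-\lambda\Id)^{-1}\zeta$ and $(L-\lambda\Id)^{-1}u_x$, collapse the $\sigma(\mu)/\sigma(\lambda)$ term using the $x$-differentiated constraint, note that the converse is automatic because the coefficients of the rational function $\mu\mapsto\sigma(\mu)$ recover the coordinates $u$, and obtain Part~2 via the substitution $\mu\to 1/\mu$ together with \eqref{eq:transition}. Your identities (A) and (B) are exactly the combined content of the paper's Lemmas \ref{m1} and \ref{m2}. The one genuine difference is how you propose to establish them: you plan an explicit companion-coordinate computation (formula for $(L_0-\lambda\Id)^{-1}\zeta_0$, Cayley--Hamilton reduction of $p(L_0)$, block-by-block bookkeeping for $r\ge 1$), whereas the paper derives both identities in a few lines from the coordinate-free Nijenhuis relation $L^*\,\ddd\det L=\det L\,\ddd\tr L$, which gives $(L^*-\lambda\Id)\,\ddd\sigma(\lambda)=\sigma(\lambda)\,\ddd f$ with $f=\tr L_0-\sum_i l_i\tr L_i$ independent of $\lambda$; comparing this relation at the two parameter values $\lambda$ and $\mu$ eliminates $\ddd f$ and yields your (A) and (B) simultaneously for all blocks, with no Cayley--Hamilton bookkeeping. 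Your route is viable but heavier, and note that for (B) the vector $(L-\lambda\Id)^{-1}u_x$ is not supported in the $L_0$-block, so the computation does not localize there; what you actually need is the adjoint resolvent identity for $\ddd\sigma(\mu)$ (the paper's Lemma \ref{m1}), which your quoted Lie-derivative identity alone does not immediately supply. With that identity in hand (it does follow from the Nijenhuis property, as you indicate), your argument closes.
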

Let us give some additional explanations. First, note that the first equation of the BKM system with $\lambda\ne\infty$ is well-defined also for $\lambda=\mu$. Indeed, the dependence on $\mu$ is analytic, and at $\mu=\lambda$ the term
 $ m(\mu) q_{xxx} (\lambda) + 2 \sigma(\mu) q_x(\lambda) + \sigma_x(\mu) q(\lambda)$ is divisible by $\mu-\lambda$ because of the second equation: to see this, differentiate the second equation with respect to $x$ at $\mu=\lambda$. A similar explanation works in the case  \eqref{eq:parametricform2}.

Let us also comment on the word ``equivalent''. The BKM system \eqref{BKM} is a system of $n+1$ equations for the $n+1$ unknown functions $u^1,\dots,u^n,\ q$. The system in Theorem \ref{t4} contains infinitely many equations, since one can choose infinitely many values of the parameter $\mu$. The dependence on the parameter $\mu$ is rational, and by choosing $n$ arbitrary generic values of $\mu$ we obtain a system of $n+1$ equations for the $n+1$ unknown functions $u^1,\dots,u^n,\ q$. Theorem \ref{t4} states that this system is equivalent to the corresponding BKM system. In particular, it is not important which (generic) values of $\mu$ we have chosen.

\begin{Ex} \label{ex:KB} {\rm
As we will see below, the parametric form of the BKM equations appears to be more convenient for certain tasks, so it can make sense to also write other elements of the BKM hierarchy in parametric form. We will do it at $\lambda_0=\infty$, assuming $n=2$, $r=0$, and $m(\lambda)\equiv 1$ (so $\bar m(\lambda)=\lambda^2,\ \bar\sigma(\lambda)=1-\lambda u^1-\lambda^2 u^2$). To this end, we substitute $q(\lambda)=1+\lambda q_1+\lambda^2 q_2+\dots$ into the constraint $1=\bar m(\lambda) q_x(\lambda)^2+\bar\sigma(\lambda) q(\lambda)^2$. We obtain

\[
 q(\lambda)=1+\frac{\lambda}{2}u^1+\lambda^2\Bigl(\frac{1}{2}u^2+\frac38(u^1)^2\Bigr)+\dots  \   \  \Longleftrightarrow  \  \  q_1=\frac{1}{2}u^1\ ,  \  \   q_2=\frac{1}{2}u^2+\frac{3}{8}(u^1)^2. 
\]
Substituting this $q$ in the equation (we recall that $t_{\infty,0}=x$ and $t_{\infty,1}= t$)  $$(\partial_{x}  + \lambda \partial_{t} + \dots) \bar \sigma(\mu)  = \frac{ \mu}{\mu - \lambda} \Bigg( \bar m(\mu) \bar q_{xxx} (\lambda) + 2 \bar \sigma(\mu) \bar q_x(\lambda) + \bar \sigma_x(\mu) \bar q(\lambda)\Bigg) $$ and equating the terms that are linear in $\lambda$ gives
\[
-\mu u_{1t}-\mu^2u_{2t}
=\frac{\mu^2}{2}u_{1xxx}
-\frac{3}{2}\mu u_1u_{1x}
-\mu^2u_2u_{1x}
-\mu\Bigl(1+\frac{\mu u_1}{2}\Bigr)u_{2x},
\]
which is  clearly equivalent to \begin{equation}\label{eq:KB}
\begin{aligned}
u^1_{t} &= u^2_{x}+\frac{3}{2}\,u^1u^1_{x},\\
u^2_{t} &= -\frac{1}{2}\,u^1_{xxx}+\frac{1}{2}\,u^1u^2_{x}+u^2u^1_{x}.
\end{aligned}
\end{equation}
This system is one of the equivalent forms of the Kaup--Boussinesq (= dispersive water wave) system.
    } 
\end{Ex}

Finally, to avoid a possible misunderstanding, we would like to comment on the change of notation in \eqref{eq:parametricform2} with respect to \eqref{BKMinf}. Namely, the parameter $\lambda$ in \eqref{eq:parametricform2} plays the role of the parameter $\mu$ in \eqref{BKMinf}. In particular, $\lambda=0$ in \eqref{eq:parametricform2} corresponds to $\lambda=\infty$ in \eqref{eq:parametricform1}. The parameter $\mu$ in \eqref{eq:parametricform2} has no analogue in \eqref{BKMinf}.

\begin{proof}[Proof of Theorem \ref{t4}]
We start with the following Lemma:

\begin{Lemma}\label{m1}
For the Nijenhuis operator $L$ given by \eqref{eq:opL} and the functions $\sigma$ and $\bar\sigma$ given by \eqref{eq:sigma} and \eqref{eq:barsigma}, the following holds for any $\mu$:
$$
(L^* - \lambda \Id)^{-1} \ddd \sigma(\mu) = \frac{1}{\mu - \lambda} \Bigg(\ddd \sigma(\mu) -  \frac{\sigma(\mu)}{\sigma(\lambda)} \ddd \sigma(\lambda)\Bigg)
$$
and
$$
(\Id - \lambda L^*)^{-1} \ddd \bar \sigma(\mu) = \frac{\mu}{\mu - \lambda} \Bigg(\ddd \bar \sigma(\mu) -  \frac{\bar \sigma(\mu)}{\bar \sigma(\lambda)} \ddd \bar \sigma(\lambda)\Bigg)
$$
\end{Lemma}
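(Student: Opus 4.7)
My plan is to reduce the computation to a block-by-block calculation, exploiting the block-diagonal structure $L = L_0\oplus\dots\oplus L_r$ in companion coordinates. Since $(L^*-\lambda\Id)^{-1}$ respects this decomposition, and the definition of $\sigma$ yields the logarithmic derivative
$$
\frac{\ddd\sigma(\mu)}{\sigma(\mu)} \;=\; \frac{\ddd\chi_0(\mu)}{\chi_0(\mu)} \;-\; \sum_{i=1}^r l_i\,\frac{\ddd\chi_i(\mu)}{\chi_i(\mu)}, \qquad \chi_i(\mu):=\det(L_i-\mu\Id),
$$
the task reduces to understanding $\ddd\chi_i(\mu)$ for a single companion block.

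The core algebraic input I would establish is the identity
$$
(L_i^*-\mu\Id)\,\ddd\chi_i(\mu) \;=\; \chi_i(\mu)\,\ddd u_i^1,
$$
where $u_i^1$ is the first companion coordinate of the $i$-th block. I would prove this by direct computation: from the companion form \eqref{eq:opL}, one has $L_i^*\,\ddd u_i^k = u_i^k\,\ddd u_i^1 + \ddd u_i^{k+1}$ (with the convention $\ddd u_i^{n_i+1}:=0$), and expanding $\chi_i(\mu)$ as a polynomial in $\mu$ whose coefficients are linear in the $u_i^k$, the $\ddd u_i^{k+1}$ contributions telescope, leaving precisely $\chi_i(\mu)\,\ddd u_i^1$. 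Equivalently $(L_i^*-\mu\Id)^{-1}\ddd u_i^1 = \ddd\chi_i(\mu)/\chi_i(\mu)$, whence
$$
\frac{\ddd\sigma(\mu)}{\sigma(\mu)} \;=\; (L_0^*-\mu\Id)^{-1}\ddd u_0^1 \;-\; \sum_{i=1}^r l_i\,(L_i^*-\mu\Id)^{-1}\ddd u_i^1.
$$

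The final step applies $(L^*-\lambda\Id)^{-1}$ to this expression block-by-block, using the resolvent identity
$$
(L_i^*-\lambda\Id)^{-1}(L_i^*-\mu\Id)^{-1} \;=\; \frac{1}{\mu-\lambda}\bigl[(L_i^*-\mu\Id)^{-1}-(L_i^*-\lambda\Id)^{-1}\bigr],
$$
and recombining via the logarithmic-derivative formula at both $\mu$ and $\lambda$ yields $\tfrac{\sigma(\mu)}{\mu-\lambda}\bigl[\tfrac{\ddd\sigma(\mu)}{\sigma(\mu)}-\tfrac{\ddd\sigma(\lambda)}{\sigma(\lambda)}\bigr]$, which is the first claimed formula. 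For the barred version, the cleanest route I foresee is the substitution $\nu=1/\mu$, $\kappa=1/\lambda$, together with $\Id-\lambda L^*=-\lambda(L^*-\kappa\Id)$ and $\bar\sigma(\mu)=(-\mu)^d\sigma(1/\mu)$, reducing the barred identity to the unbarred one evaluated at $(\nu,\kappa)$ after clearing the factor $\tfrac{1}{\nu-\kappa}=-\tfrac{\mu\lambda}{\mu-\lambda}$. The main obstacle I anticipate is not algebraic difficulty but careful bookkeeping of the various $(-\mu)^d$, $(-\lambda)^d$, and $1/\lambda$ prefactors in the barred case; as a fallback, one can simply repeat the telescoping argument directly for $\bar\chi_i(\mu):=(-\mu)^{n_i}\chi_i(1/\mu)$, which satisfies the analogous identity $(\Id-\mu L_i^*)\,\ddd\bar\chi_i(\mu) = \bar\chi_i(\mu)\,\ddd u_i^1$ up to a sign.
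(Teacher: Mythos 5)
Your proof is correct and follows the same skeleton as the paper's: both arguments hinge on the single identity $(L^*-\mu\Id)\,\ddd\sigma(\mu)=\sigma(\mu)\,\ddd f$ with $\ddd f=\ddd u^1_0-\sum_i l_i\,\ddd u^1_i$ independent of $\mu$ (equation \eqref{sm1} in the paper), followed by an elimination of $\ddd f$ between the parameter values $\mu$ and $\lambda$; your resolvent identity is just a repackaging of the paper's rewriting $(L^*-\mu\Id)=(L^*-\lambda\Id)+(\lambda-\mu)\Id$, and your treatment of the barred case by the substitution $\mu\to 1/\mu$, $\lambda\to 1/\lambda$ with the $(-\mu)^d$ prefactor is exactly what the paper does. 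The one genuine difference is how the key identity is obtained: the paper imports it from Nijenhuis geometry, citing $L^*\,\ddd\det L=\det L\,\ddd\tr L$ from \cite{nij1} together with the closure of Nijenhuis operators under polynomials, whereas you prove the block version $(L_i^*-\mu\Id)\,\ddd\chi_i(\mu)=\chi_i(\mu)\,\ddd u_i^1$ by a direct telescoping computation in companion coordinates (which checks out, and is consistent with the relations \eqref{sm2} the paper itself uses later in Lemma \ref{m2}). Your route is more self-contained and elementary; the paper's is shorter and makes clear that the statement is really a coordinate-free fact about differentially nondegenerate Nijenhuis operators rather than a special feature of the companion form.
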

(Again, as in Theorem \ref{t4}, both formulas in the Lemma make sense and remain correct also for $\mu=\lambda$.)

\begin{proof}
By \cite{nij1}, the differentials of the determinant and the trace of any Nijenhuis operator $L$ satisfy the following relation: $L^*\ddd\det L=\det L,\ddd\tr L$. Since polynomials of Nijenhuis operators are Nijenhuis operators, for an arbitrary power $s$ we obtain
$$
L^* \ddd (\det L)^s = s (\det L)^s \ddd \tr L.
$$
Hence, for an arbitrary $\lambda$ we obtain, for the function $\sigma(u,\lambda)$ defined in \eqref{eq:sigma},
\begin{equation}\label{sm1}
(L^* - \lambda \Id) \ddd \sigma(u, \lambda) = \sigma(u, \lambda)\ddd (\tr L_0 - l_1 \tr L_1 - \dots - l_r \tr L_r).    
\end{equation}
We denote the second factor on the right-hand side, which does not depend on $\lambda$, by $f$, and obtain $(L^*-\lambda\Id)\ddd\sigma(\lambda)=\sigma(\lambda)\ddd f$. This formula holds for any value of $\lambda$; taking $\lambda\ne\mu$, we get
$$
(L^* - \lambda \Id) \frac{\ddd \sigma(\lambda)}{\sigma(\lambda)} = (L^* - \mu \Id) \frac{\ddd \sigma(\mu)}{\sigma(\mu)} = \ddd f.
$$
We rewrite it as
$$
(L^* - \lambda \Id) \frac{\ddd \sigma(\lambda)}{\sigma(\lambda)} = (L^* - \lambda \Id) \frac{\ddd \sigma(\mu)}{\sigma(\mu)} + (\lambda - \mu) \frac{\ddd \sigma(\mu)}{\sigma(\mu)}.
$$
Multiplying both sides by $(L^*-\lambda\Id)^{-1}$ and rearranging the terms, we obtain the first statement of the lemma. Substituting $\lambda\to\lambda^{-1}$, $\mu\to\mu^{-1}$, and multiplying both sides by $(-\mu)^d$, we see that the left-hand side takes the form
$$
(-\mu)^d (L^* - \lambda^{-1} \Id)^{-1} \ddd \sigma(\mu^{-1}) = - \lambda (\Id - \lambda L^*)^{-1} \ddd \bar \sigma(\mu).
$$
The right hand side  takes the  form
$$
\begin{aligned}
(-\mu)^d\frac{1}{\mu^{-1} - \lambda^{-1}} \Bigg(\ddd \sigma(\mu^{-1}) -  \frac{\sigma(\mu^{-1})}{\sigma(\lambda^{-1})}\ddd \sigma(\lambda^{-1})\Bigg) & = \frac{\lambda \mu}{\lambda - \mu} \Bigg(\ddd \bar \sigma(\mu) -  \frac{\bar \sigma(\mu)}{(-1)^d \sigma(\lambda^{-1})} (-1)^d\ddd \sigma(\lambda^{-1})\Bigg) = \\
& = \frac{\lambda \mu}{\lambda - \mu} \Bigg(\ddd \bar \sigma(\mu) -  \frac{\bar \sigma(\mu)}{\bar \sigma(\lambda)} \ddd \bar \sigma(\lambda)\Bigg).
\end{aligned}
$$
Dividing both sides by $-\lambda$, we obtain the second statement of Lemma.
\end{proof}

\begin{Lemma}\label{m2}
Under the assumptions of Lemma \ref{m1}, and for $\zeta$ given by \eqref{eq:zeta}, we have
$$
\mathcal L_{\zeta} \sigma(\mu) = m(\lambda) - (-1)^d m_d \sigma(\lambda).
$$
and
$$
\mathcal L_{\zeta} \bar \sigma(\mu) = \bar m(\lambda) - (-1)^d m_d \bar \sigma(\lambda).
$$
\end{Lemma}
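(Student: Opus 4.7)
My plan is to transpose the action of $\zeta = p(L)\zeta_0$, with $p(t) = D(t)m(t)$ and $D(t) := \prod_{i\geq 1}(\det(L_i-t\Id))^{l_i}$, onto the cotangent side, and exploit the Nijenhuis identity
\[
(L^* - \mu\Id)\ddd\sigma(\mu) = \sigma(\mu)\,\ddd f, \qquad f := \tr L_0 - \sum_{i=1}^r l_i\,\tr L_i,
\]
established in the proof of Lemma \ref{m1} (see \eqref{sm1}). Writing $\mathcal L_\zeta\sigma(\mu) = \langle\ddd\sigma(\mu), p(L)\zeta_0\rangle = \langle p(L^*)\ddd\sigma(\mu), \zeta_0\rangle$ and splitting $p(t) = p(\mu) + (t-\mu)q(t)$ with $q(t) := (p(t)-p(\mu))/(t-\mu)$, I get
\[
p(L^*)\ddd\sigma(\mu) = p(\mu)\,\ddd\sigma(\mu) + \sigma(\mu)\,q(L^*)\ddd f,
\]
so
\[
\mathcal L_\zeta\sigma(\mu) = p(\mu)\,\mathcal L_{\zeta_0}\sigma(\mu) + \sigma(\mu)\,\langle\ddd f,\, q(L)\zeta_0\rangle.
\]

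Next I would evaluate each piece. Since $\zeta_0$ acts only in the $U_0$-block and $\sigma(\mu) = \det(L_0-\mu\Id)/D(\mu)$, the defining identity \eqref{eq:zeta} yields $\mathcal L_{\zeta_0}\sigma(\mu) = 1/D(\mu)$, so $p(\mu)\,\mathcal L_{\zeta_0}\sigma(\mu) = m(\mu)$. For the residual term, block-diagonality of $L$ reduces $q(L)\zeta_0$ to $q(L_0)\zeta_0$, and only $\ddd\tr L_0 = \ddd u_0^1$ survives from $\ddd f$. Using the companion-form identity $L_0^k\zeta_0 = (-1)^{n_0+1}e_{n_0-k}$ for $0\le k\le n_0-1$, the pairing $\langle\ddd u_0^1, q(L_0)\zeta_0\rangle$ equals $(-1)^{n_0+1}$ times the coefficient of $t^{n_0-1}$ in $q(t)$, which in turn coincides with the leading coefficient of $p(t)$. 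When $m_d\neq 0$ this leading coefficient is $(-1)^{n_0-d}m_d$, and the relation $n_0 = d + \sum l_i n_i$ collapses the combined sign to $-(-1)^d m_d$; the case $m_d = 0$ is trivial since both sides vanish. This yields
\[
\mathcal L_\zeta\sigma(\mu) = m(\mu) - (-1)^d m_d\,\sigma(\mu),
\]
which I take to be the intended statement (the $\lambda$'s on the right-hand side appear to be typos for $\mu$).

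The barred identity is obtained either by repeating the argument with $\bar\sigma,\bar m$ in place of $\sigma,m$ (invoking the second identity of Lemma \ref{m1}), or by applying the substitution $\mu\mapsto 1/\mu$ with the $(-\mu)^d$-rescaling used at the end of the proof of Lemma \ref{m1}; the computation transports over after tracking homogeneity weights. The main obstacle I foresee is the sign bookkeeping in the primary argument: correctly identifying the top coefficient of $q(t)=(p(t)-p(\mu))/(t-\mu)$ with that of $p$, and combining it with the orientation $\zeta_0 = (-1)^{n_0+1}e_{n_0}$ through the constraint $n_0 = d + \sum l_i n_i$ to land on the clean form $-(-1)^d m_d$.
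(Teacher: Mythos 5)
Your proof is correct and follows essentially the same route as the paper: both dualize the action of $\zeta=p(L)\zeta_0$ to $p(L^*)$ acting on $\ddd \sigma(\mu)$, apply the identity $(L^*-\mu\Id)\,\ddd\sigma(\mu)=\sigma(\mu)\,\ddd f$ from the proof of Lemma \ref{m1}, and finish with an explicit companion-coordinate computation using $\zeta_0=(-1)^{n_0+1}e_{n_0}$; your division $p(t)=p(\mu)+(t-\mu)q(t)$ is simply a cleaner packaging of the paper's term-by-term evaluation of $\mathcal L_{L^k\zeta_0}\sigma(\mu)$, and it tracks the factors $(\det(L_i-\mu\Id))^{l_i}$ for $r\ge 1$ more transparently. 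You are also right that the $\lambda$'s on the right-hand side of the statement are typos for $\mu$ — the paper's own proof concludes with $m(\mu)-(-1)^d m_d\,\sigma(\mu)$.
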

\begin{proof}
We work in the coordinate system in which $L$ is given by \eqref{eq:opL}; it is built from companion coordinates for the operators $L_i$. We obtain
\begin{equation}\label{sm2}
\begin{aligned}
L^* \ddd u^i_0 & = u^i_0 \ddd u^1_0 + \ddd u^{i + 1}_0, \quad i = 1, \dots, n_0 - 1, \\
L^* \ddd u^{n_0}_0 & = u^{n_0}_0 \ddd u^1_0.
\end{aligned}    
\end{equation}
Next, using   \eqref{sm1} we obtain 
$$
L^* \ddd \sigma(\mu) = \mu \ddd \sigma(\mu) + \sigma(\mu) \ddd f,
$$
which implies $\mathcal L_{L\zeta_0} \sigma(\mu) = \mu$. Combining this with  \eqref{sm2}, we obtain 
$$
\mathcal L_{L^k \zeta_0} \sigma(\mu) = \mu^k, \quad k = 0, \dots, d - 1 \quad \text{and} \quad \mathcal L_{L^d \zeta_0} = u^1_0 \mu^{n_0 - 1} + \dots + u^{n_0}_0. 
$$
Finally, we get  
$$
\mathcal L_\zeta \sigma(\mu) = m_0 + m_1 \mu + \dots + m_{d - 1} \mu^{d - 1} + m_d(u^1_0 \mu^{n_0 - 1} + \dots + u^{n_0}_0) = m(\mu) - (-1)^d m_d \sigma(\mu).
$$
The first statement of the Lemma is proved. The second statement is obtained after substituting $\mu\to 1/\mu$ and multiplying both sides by $(-\mu)^d$.
\end{proof}

We are now able to prove Theorem \ref{t4}. Differentiating the constraint, we obtain
$$
\bigg(m(\lambda) q_{xxx}(\lambda) + 2 \sigma(\lambda) q_x(\lambda) + \sigma_x(\lambda) q(\lambda)\bigg) q(\lambda) = 0.
$$
Then, the expression in brackets in the formula above  vanishes. Combining this with Lemmas \ref{m1} and \ref{m2}, we obtain
$$
\begin{aligned}
& \partial_{t_\lambda} \sigma(\mu) = \ddd \sigma(\mu) (u_{t_\lambda}) = \frac{1}{\mu - \lambda} q(\lambda)_{xxx} \Bigg(m(\mu) - (-1)^d \sigma(\mu)  -  \frac{\sigma(\mu)}{\sigma(\lambda)} \big(m(\mu) - (-1)^d \sigma(\mu)\big)\Bigg) + \\
& + \frac{1}{\mu - \lambda} q(\lambda) \Bigg(\sigma_x(\mu) - \frac{\sigma(\mu)}{\sigma(\lambda)} \sigma_x(\lambda)\Bigg) = \frac{1}{\mu - \lambda} \Big( m(\mu) q_{xxx}(\lambda) + 2 \sigma(\mu) q_x(\lambda) + \sigma_x(\mu) q(\lambda)\Big).
\end{aligned}
$$
Then \eqref{BKM} implies the first equation in \eqref{eq:parametricform1}. The converse statement is also evident since the dynamics of the phase variables can be reconstructed from the dynamics of $\sigma(\mu)$ for all values of the parameter~$\mu$. 

To prove the second statement, we recall that the systems \eqref{BKM} and \eqref{BKMinf} are equivalent, and the substitution establishing this equivalence is \eqref{eq:transition}. Using this substitution, we obtain \eqref{eq:parametricform2} from \eqref{eq:parametricform1}.
\end{proof}

We are now ready to formulate our main result, namely, the Lax pair for BKM systems. Consider $C^\infty_2(\mathbb R)=C^\infty(\mathbb{R};\mathbb{C}^2)$ and the following operators $L,\mathbb{P}:C^\infty_2 \to C^\infty_2$:

$$
L(\mu) = \left(\begin{array}{cc}
     0 & 1  \\
     - \frac{1}{2}\frac{\sigma(\mu)}{m(\mu)} & 0 
\end{array}\right), \quad \mathbb{P} (\mu, \lambda) = \left( \begin{array}{cc}
     - \frac{1}{2} q_x(\lambda) & q(\lambda)  \\
     - \frac{1}{2}q_{xx} (\lambda) -\frac{1}{2} \frac{\sigma(\mu)}{m(\mu)} q(\lambda) & \frac{1}{2}q_x(\lambda)
\end{array}\right).
$$
and
$$
\bar L(\mu) = \left(\begin{array}{cc}
     0 & 1  \\
     - \frac{1}{2}\frac{\bar \sigma(\mu)}{\bar m(\mu)} & 0 
\end{array}\right), \quad \bar{\mathbb{P}} (\mu, \lambda) = \left( \begin{array}{cc}
     - \frac{1}{2} \bar q_x(\lambda) & \bar q(\lambda)  \\
     - \frac{1}{2} \bar q_{xx} (\lambda) - \frac{1}{2} \frac{\bar \sigma(\mu)}{\bar m(\mu)} \bar q(\lambda) & \frac{1}{2}\bar q_x(\lambda)
\end{array}\right).
$$
The components of $L$ depend on a single parameter $\mu$, while the components of $\mathbb{P}$ depend on two parameters, $\mu$ and $\lambda$. In both cases, the dependence on the parameter $\mu$ is rational.
We denote by $D:C^\infty_2\to C^\infty_2$ the operator $D=\begin{pmatrix}D & 0\\ 0 & D\end{pmatrix}$, i.e., acting on $\Psi=(\psi_1,\psi_2)^T$ by
$$D\begin{pmatrix} \psi_1\\ \psi_2\end{pmatrix}=\begin{pmatrix} D\psi_1\\ D\psi_2\end{pmatrix}.$$

\begin{Theorem}\label{t5}
Let $\mathbb{L}(\mu)=D-L(\mu)$ and $\bar{\mathbb{L}}(\mu)=D-\bar L(\mu)$, where $L(\mu)$, $\bar L(\mu)$, $\mathbb{P}(\mu,\lambda)$, and $\bar{\mathbb{P}}(\mu,\lambda)$ are as above. Then, for every $\lambda\ne\infty$, the BKM system \eqref{BKM} is equivalent to the parametric Lax equation

\begin{equation}
    \label{eq:laxBKM}
\partial_{t_\lambda} \mathbb{L}(\mu) = \frac{1}{\mu - \lambda} [\mathbb{P}(\mu, \lambda),  \mathbb{L}(\mu)]
\end{equation}
with condition that for $m(\lambda) \neq 0$ the right hand  side has no pole in $\mu = \lambda$ and normalization 
\begin{equation}\label{eq:normalisation1}
2m(\lambda) \det \mathbb{P}(\lambda, \lambda) = 1. 
\end{equation}
Moreover,  the for every $\lambda\ne \infty$, the  BKM system \eqref{BKMinf} is equivalent to the parametric Lax equation 
\begin{equation}
    \label{eq:laxBKMinf}
\partial_{t_\lambda} \bar {\mathbb{ L}} (\mu) = \frac{\mu}{\mu - \lambda} [\bar{\mathbb{P}}(\mu, \lambda),\bar {\mathbb{L}} (\mu) ]
\end{equation}
with the similar  condition (no pole for $\lambda = \mu$ if $\bar m(\lambda) \neq 0$) and normalization 
\begin{equation}\label{eq:normalisation2}
2 \bar m(\lambda) \det \bar{\mathbb{P}}(\lambda, \lambda) = 1.
\end{equation}
\end{Theorem}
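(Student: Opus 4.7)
The plan is a direct verification by computation. The operators $\mathbb{L}(\mu)$ and $\bar{\mathbb{L}}(\mu)$ depend on the field variables $u$ only through their $(2,1)$-entry (via $\sigma(\mu)/m(\mu)$ or $\bar\sigma(\mu)/\bar m(\mu)$), so the left-hand side $\partial_{t_\lambda}\mathbb{L}(\mu)$ is a matrix with a single nonzero entry, namely $\tfrac{1}{2m(\mu)}\partial_{t_\lambda}\sigma(\mu)$ in position $(2,1)$. I expect, and will verify, that the commutator $[\mathbb{P}(\mu,\lambda),\mathbb{L}(\mu)]$ has the same structure, with all entries except the $(2,1)$-entry cancelling identically; this cancellation is the content of the ``correct'' choice of $\mathbb{P}$. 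The resulting scalar identity from the $(2,1)$-entries, once the factor $\tfrac{1}{\mu-\lambda}$ is included, will be precisely the first equation of the parametric system \eqref{eq:parametricform1} from Theorem~\ref{t4}.

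For the commutator itself I would expand $\mathbb{P}\mathbb{L}-\mathbb{L}\mathbb{P}$ as a $2\times 2$ matrix of scalar differential operators, treating the entries as compositions of multiplication operators (by $q(\lambda)$, $q_x(\lambda)$, $q_{xx}(\lambda)$, and $\sigma(\mu)/m(\mu)$) with the derivation $D$, using the commutation rule $D\cdot f = f\cdot D + Df$. The diagonal entries $(1,1)$ and $(2,2)$ collapse immediately to zero, as does $(1,2)$; the $(2,1)$-entry collects to
\[
[\mathbb{P},\mathbb{L}]_{21}=\tfrac{1}{2m(\mu)}\bigl(m(\mu)\,q_{xxx}(\lambda)+2\sigma(\mu)\,q_x(\lambda)+\sigma_x(\mu)\,q(\lambda)\bigr).
\]
Dividing by $\mu-\lambda$ and comparing with $\partial_{t_\lambda}\mathbb{L}(\mu)$ gives exactly the first equation of \eqref{eq:parametricform1}.

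Next I would identify the two auxiliary conditions with the constraint. A direct expansion of the $2\times 2$ determinant yields
\[
2m(\lambda)\det\mathbb{P}(\lambda,\lambda)=m(\lambda)\bigl(q(\lambda)q_{xx}(\lambda)-\tfrac{1}{2}q_x(\lambda)^2\bigr)+\sigma(\lambda)q(\lambda)^2,
\]
so the normalization \eqref{eq:normalisation1} is literally the constraint in \eqref{eq:parametricform1}. Differentiating that constraint in $x$ and factoring out $q(\lambda)$ (nonzero, since the constraint forces $\sigma(\lambda)q(\lambda)^2$ to be close to $1$) yields $m(\lambda)q_{xxx}(\lambda)+2\sigma(\lambda)q_x(\lambda)+\sigma_x(\lambda)q(\lambda)=0$, which is precisely the vanishing of the bracket $[\mathbb{P},\mathbb{L}]_{21}$ at $\mu=\lambda$; equivalently, the no-pole condition on the right-hand side. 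Thus the two auxiliary requirements together are equivalent to the constraint, and combined with the Lax equation they reproduce \eqref{eq:parametricform1}, which by Theorem~\ref{t4} is equivalent to \eqref{BKM}.

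For the barred version \eqref{eq:laxBKMinf} I would either repeat the computation verbatim (the extra factor $\mu$ in the prefactor $\mu/(\mu-\lambda)$ arises from the $\mu$ already present in the right-hand side of \eqref{eq:parametricform2}), or, more economically, deduce it from the first part by the substitution $\mu\mapsto 1/\mu$, $\lambda\mapsto 1/\lambda$ combined with the rescalings of $q$ and $t$ from \eqref{eq:transition}, exactly as in the passage from \eqref{eq:parametricform1} to \eqref{eq:parametricform2} in the proof of Theorem~\ref{t4}. The only real obstacle is keeping careful track of the operator composition in the first step; once that algebra is done correctly, the remaining statements are a matter of matching coefficients.
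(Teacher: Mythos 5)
Your proposal is correct and follows essentially the same route as the paper: a direct entry-by-entry computation of the $2\times 2$ commutator showing that only the $(2,1)$-entry survives, identification of that entry (after the factor $\tfrac{1}{\mu-\lambda}$) with the first equation of the parametric form \eqref{eq:parametricform1}, identification of the normalization with the constraint, and reduction of the barred case to the same algebra. The one place you diverge is in linking the no-pole condition to the constraint: the paper appeals to Theorem~\ref{t2} to conclude that $\det\mathbb{P}(\lambda,\lambda)$ is constant, whereas you observe directly that the vanishing $(2,1)$-entry of $[\mathbb{P}(\lambda,\lambda),\mathbb{L}(\lambda)]$ is, up to the invertible factor $q(\lambda)$, the $x$-derivative of $2m(\lambda)\det\mathbb{P}(\lambda,\lambda)$, so the no-pole condition forces this determinant to be constant and the normalization fixes the constant to $1$. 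Your version is more elementary and self-contained (it does not need the simple-spectrum hypothesis or the formal-series centralizer machinery), and it in fact coincides with the informal explanation the paper gives later in Section~\ref{sec3}; the paper's appeal to Theorem~\ref{t2} buys nothing extra here beyond foreshadowing the conservation-law construction.
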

Let us give additional explanations. First of all, as the formulas are analytic in $\mu$, the condition that the right-hand side of \eqref{eq:laxBKM} has no pole means that it vanishes for $\mu=\lambda$. A similar reformulation also holds for \eqref{eq:laxBKMinf}. Second, as we will see in the proof, the conditions \eqref{eq:normalisation1} and \eqref{eq:normalisation2} are just the constraints in \eqref{BKM} and \eqref{BKMinf}. See also the discussion at the end of Section \ref{sec4}. Finally, note that (as also discussed after Theorem \ref{t4}), the role of $\lambda$ and $\mu$ in the equations \eqref{eq:laxBKMinf} and \eqref{eq:normalisation2} is different from that in \eqref{BKMinf}.

Note also that the factors $\tfrac{1}{\lambda-\mu}$ and $\tfrac{\mu}{\lambda-\mu}$ in the parametric Lax equations \eqref{eq:laxBKM} and \eqref{eq:laxBKMII} can be ``hidden'' in $\mathbb{P}$ and $\bar{\mathbb{P}}$, which turns the parametric equation into the usual Lax equation \eqref{eq:laxnew}.

\begin{proof} The proof is straightforward: we calculate the parametric Lax equation \eqref{eq:laxBKM} and compare it with the equivalent form of the BKM equations from Theorem \ref{t4}. Note that, in these calculations, the form of the functions $\sigma(\mu)$ and $m(\mu)$, and their barred analogues, is not important (though of course we use that $\sigma$ depends on $\mu,x,t$ and $m$ depends on $\mu$ only). Let us give details of the calculations. Clearly,
$$
\partial_{t_\lambda} \mathbb{L}=  \partial_{t_\lambda} \left(\begin{array}{cc}
     0 & 1  \\
     - \frac{1}{2}\frac{\sigma(\mu)}{m(\mu)} & 0 
\end{array}\right) = \left(\begin{array}{cc}
     0 & 0  \\
     - \frac{1}{2}\frac{\partial_{t_\lambda}\sigma(\mu)}{m(\mu)} & 0 
\end{array}\right).
$$
Next, clearly, $[\mathbb L,\mathbb{P}]=[D,\mathbb{P}]-\mathbb{L}\mathbb{P}+\mathbb{P}\mathbb{L}$. We carefully calculate all three terms on the right: 
$$
[D, \mathbb{P}]= D \mathbb{P}- \mathbb{P} D=   \mathbb{P}_x = \left( \begin{array}{cc}
     - \frac{1}{2} q_{xx}(\lambda) & q_x(\lambda)  \\
     - \frac{1}{2}q_{xxx} (\lambda) - \frac{1}{2} \frac{\sigma_x(\mu)}{m(\mu)} q(\lambda) - \frac{1}{2} \frac{\sigma(\mu)}{m(\mu)} q_x(\lambda) & \frac{1}{2}q_{xx}(\lambda)
\end{array}\right).
$$
$$
 \begin{aligned}
L\mathbb{P}  &= \left(\begin{array}{cc}
     0 & 1  \\
     - \frac{1}{2}\frac{\sigma(\mu)}{m(\mu)} & 0 
\end{array}\right) \left( \begin{array}{cc}
     - \frac{1}{2} q_x(\lambda) & q(\lambda)  \\
     - \frac{1}{2}q_{xx} (\lambda) -\frac{1}{2} \frac{\sigma(\mu)}{m(\mu)} q(\lambda) & \frac{1}{2}q_x(\lambda)
\end{array}\right) = \\
& = \left( \begin{array}{cc}
     - \frac{1}{2}w_{xx} (\lambda) -\frac{1}{2} \frac{\sigma(\mu)}{m(\mu)} q(\lambda) &  \frac{1}{2} q_x(\lambda) \\
     \frac{1}{4}\frac{\sigma(\mu)}{m(\mu)} q_x(\lambda) &  - \frac{1}{2}\frac{\sigma(\mu)}{m(\mu)} q(\lambda)
\end{array}\right),   
\end{aligned}
$$
$$
\begin{aligned}
 \mathbb{P}L&= \left( \begin{array}{cc}
     - \frac{1}{2} q_x(\lambda) & q(\lambda)  \\
     - \frac{1}{2}q_{xx} (\lambda) -\frac{1}{2} \frac{\sigma(\mu)}{m(\mu)} q(\lambda) & \frac{1}{2}q_x(\lambda)
\end{array}\right) \left(\begin{array}{cc}
     0 & 1  \\
     - \frac{1}{2}\frac{\sigma(\mu)}{m(\mu)} & 0 
\end{array}\right) = \\
& = \left(\begin{array}{cc}
     - \frac{1}{2}\frac{\sigma(\mu)}{m(\mu)} q(\lambda) & - \frac{1}{2} q_x(\lambda) \\
     - \frac{1}{4}\frac{\sigma(\mu)}{m(\mu)} q_x(\lambda) & - \frac{1}{2}q_{xx} (\lambda) -\frac{1}{2} \frac{\sigma(\mu)}{m(\mu)} q(\lambda)
\end{array}\right).     
\end{aligned}
$$
  Combining these three terms, we obtain 
$$
\begin{aligned}
\relax
[\mathbb{L}(\mu), \mathbb{P}(\mu, \lambda)] & = \mathbb{P}_x(\mu, \lambda) - L(\mu) \mathbb{P}(\mu, \lambda) + \mathbb{P}(\mu, \lambda) L(\mu) = \\
& = \left( \begin{array}{cc}
     0 & 0  \\
     - \frac{1}{2}q_{xxx} (\lambda) - \frac{1}{2} \frac{\sigma_x(\mu)}{m(\mu)} q(\lambda) - \frac{1}{2} \frac{2 \sigma(\mu)}{m(\mu)} q_x(\lambda) & 0
\end{array}\right).    
\end{aligned}
$$
Finally, formula $\partial_{t_\lambda} \mathbb{L}(\mu) - \frac{1}{\mu - \lambda}[\mathbb{L}(\mu), \mathbb{P}(\mu, \lambda)]$ is equivalent to 
$$
- \frac{1}{2} \frac{\partial_{t_\lambda} \sigma(\mu)}{m(\mu)} = \frac{1}{\mu - \lambda} \Bigg(- \frac{1}{2}q_{xxx} (\lambda) - \frac{1}{2} \frac{\sigma_x(\mu)}{m(\mu)} q(\lambda) - \frac{1}{2} \frac{2 \sigma(\mu)}{m(\mu)} q_x(\lambda)\Bigg).
$$
Multiplying both sides by $-2m(\mu)$, we arrive at the first equation \eqref{eq:parametricform1} of Theorem \ref{t4}. The no-pole condition at $\lambda=\mu$ is equivalent to $[\mathbb L(\lambda),\mathbb{P}(\lambda,\lambda)]=0$. Theorem \ref{t2} implies that $\det \mathbb{P}(\lambda,\lambda)=c(\lambda)$, i.e., a constant depending on the parameter. The normalization condition reads 
$$
m(\lambda) (q_{xx}\big(\lambda) q(\lambda) - \frac{1}{2} q_x(\lambda)\big) + \sigma(\lambda) q^2(\lambda) = 1.
$$
This is exactly the differential constraint in \eqref{eq:parametricform1}. The first statement of Theorem \ref{t4} is proved. In order to prove the second statement, we note that the calculations in the proof did not use the special form of $\sigma$ and $m$, and would work for any functions $\sigma(\mu,u)$ and $m(\mu)$. As $\bar L$ and $\bar{\mathbb{P}}$ are given by essentially the same formulas as $L$ and ${\mathbb{P}}$, with the only difference that all objects inside are barred, the Lax equation for $\bar{\mathbb{L}}$ and $\bar{\mathbb{P}}$ gives the barred version of the Lax equation for ${\mathbb{L}}$ and ${\mathbb{P}}$. The difference between the parametric Lax equations \eqref{eq:laxBKM} and \eqref{eq:laxBKMinf} reflects the different  factors $\tfrac{1}{\mu-\lambda}$ and $\tfrac{\mu}{\mu-\lambda}$ in the equations \eqref{eq:parametricform1} and \eqref{eq:parametricform2}.
\end{proof}


\section{Lax pairs for four types of BKM systems and examples}\label{sec5}
\subsection{Lax pairs for BKM I and BKM II systems}
Theorem \ref{t5} allows one to obtain the Lax pairs for four type of BKM equations and hierarchies of their  symmetries. The Lax pairs for BKM I and II are already  given by Theorem \ref{t5}. 

\textbf{ Lax pair for BKM I. } Fix  $\lambda\ne \infty $ and assume $m(\lambda) \neq 0$. We get
\begin{equation}\label{eq:laxBKMI}
L(\mu) = \left(\begin{array}{cc}
     0 & 1  \\
     - \frac{1}{2} \frac{\sigma(\mu)}{m(\mu)} & 0 
\end{array}\right), \quad  \mathbb{P}(\mu) = \left( \begin{array}{cc}
     - \frac{1}{2} q_x & q  \\
     - \frac{1}{2} q_{xx} - \frac{1}{2} \frac{\sigma(\mu)}{m(\mu)} q & \frac{1}{2}q_x
\end{array}\right)
\end{equation}
The corresponding Lax pair is $\mathbb L(\mu)=D-L(\mu)$ and $\mathbb P(\mu)$. The parametric Lax equation \eqref{eq:laxBKM} is equivalent to the BKM I equations \eqref{BKM}. Here $\mu$ is a ``spectral'' parameter, and the ``normalization'' $m(\lambda)\det \mathbb P(\lambda)=1$ is the constraint.

\textbf{Lax pair for BKM II. } Fix  $\lambda\ne \infty$ and assume now  $m(\lambda) = 0$. Then, the constraint $m(\lambda) \det \mathbb P(\lambda) = 1$ can be resolved with respect to $q$ and  gives  
$q = \sigma(\lambda, u)^{-1/2}$. Consequently,    \eqref{eq:laxBKMI} reads   
\begin{equation}\label{eq:laxBKMII}
L(\mu) = \left(\begin{array}{cc}
     0 & 1  \\
     - \frac{1}{2}\frac{\sigma(\mu)}{m(\mu)} & 0 
\end{array}\right), \quad\mathbb{P} (\mu) = \left( \begin{array}{cc}
     - \frac{1}{2}\Big(\frac{1}{\sqrt{\sigma(\lambda)}}\Big)_x & \frac{1}{\sqrt{\sigma(\lambda)}}  \\
     - \frac{1}{2}\Big(\frac{1}{\sqrt{\sigma(\lambda)}}\Big)_{xx} - \frac{1}{2} \frac{\sigma(\mu)}{\sqrt{\sigma(\lambda)}} & \frac{1}{2}\Big(\frac{1}{\sqrt{\sigma(\lambda)}}\Big)_x
\end{array}\right).
\end{equation}
The corresponding Lax pair is $\mathbb L(\mu) = D - L(\mu), \mathbb{P} (\mu)$. The parametric Lax equation \eqref{eq:laxBKM} is equivalent to the   BKM II equation \eqref{BKM:II}.

\subsection{Lax pairs for the hierarchies and for  BKM III and BKM IV  systems. } The construction of the hierarchies of 
symmetries was described   in Section \ref{sec_bkm}. For the BKM equation \eqref{BKM},  it is  based on the formal expantion  \eqref{eq:partialt} of $\partial_{t_{\lambda_0+ \varepsilon}}$ and substitution power series $q=q_0+ \varepsilon q_1+ \dots$  in  the BKM equation and in the corresponding constraints, and equating coefficients of the obtained formal power series in  $\varepsilon$. One obtains   Lax pairs for the hierarchy by a similar procedure: 
we replace  $\partial_{t_{\lambda_0+ \varepsilon}}$ by \eqref{eq:partialt},  
$\lambda$ by $\lambda_0+ \varepsilon$, $q$ by  $q_0+ \varepsilon q_1+ \dots$.     
As the operator $\mathbb{L}$ in the Lax pair does not depend on $\lambda$ and the right hand side of the Lax equation 
linearly depends on $\tfrac{1}{\mu-\lambda}\mathbb{P}$,   we obtain the Lax pairs with the  same $\mathbb{L}$  as before. In these Lax pairs, 
the new $\mathbb{P}$-operators  are   the coefficients  of the initial $\mathbb{P}$, after  these substitutions, at the corresponding powers of $\varepsilon$.  We note that we can use the following expansion of  the factor $\tfrac{1}{\mu- \lambda_0 - \varepsilon}$: 
$$\frac{1}{\mu- \lambda_0 - \varepsilon}=   \frac{1}{\mu- \lambda_0}\left(1 + \frac{\varepsilon}{\mu-\lambda_0} + \left(\frac{\varepsilon}{\mu-\lambda_0}\right)^2 +\dots\right).     $$
We need to multiply by this formal series in $\varepsilon$ the matrix    
$$  \mathbb{P} (\mu, \lambda_0+\varepsilon) = \left( \begin{array}{cc}
     - \frac{1}{2} (q_0+ \varepsilon q_1 + \dots)_x+ & q_0+ \varepsilon q_1 +\dots  \\
     - \frac{1}{2}(q_0+ \varepsilon q_1  + \dots  )_{xx} -\frac{1}{2} \frac{\sigma(\mu)}{m(\mu)} (q_0+ \varepsilon q_1 + \dots) & \frac{1}{2}(q_0 +\varepsilon q_1+ \dots)_x
\end{array}\right). $$ The terms at $\varepsilon^k$ will give us  the $\mathbb{P}$-matrix for the $k$th element of the hierarchy.

In the case $\lambda_0=\infty$ (that is, for the BKM equation \eqref{BKMinf} at $\mu=0$), the procedure is similar. Of course, in this case we substitute $\lambda=\varepsilon$ in the formula for $\bar{\mathbb{P}}$, and there is an additional factor $\mu$ in the parametric Lax equation.

\begin{Ex} \rm{ Let us calculate the Lax pair for  first two equations of the BKM hierarchy  corresponding to \eqref{BKMinf}. We have
$$\begin{aligned}
\bar{\mathbb{P}} (\mu, \varepsilon) & = \left(1 + \tfrac{\varepsilon}{\mu} + \dots\right) \begin{pmatrix} 
     - \tfrac{1}{2}(1+ \varepsilon  \bar q_1+ \dots)_x & 1+ \varepsilon \bar q_1  + \dots  \\
     - \tfrac{1}{2} (1+ \varepsilon \bar q_1+ \dots)_{xx} - \tfrac{1}{2} \tfrac{\bar \sigma(\mu)}{\bar m(\mu)} (1 + \varepsilon \bar q_1+ \dots ) & \tfrac{1}{2}(1+ \bar q_1+ \dots)_x 
\end{pmatrix}  \\
& =   \begin{pmatrix}
0 & 1 \\[0.4em]
-\tfrac{1}{2}\tfrac{\bar \sigma(\mu)}{\bar m(\mu)} & 0
\end{pmatrix}  + \varepsilon\begin{pmatrix}
-\tfrac{1}{2}\,(\bar q_1)_x 
&
\bar q_1 + \tfrac{1}{\mu}
\\ 
-\tfrac{1}{2}\,{(\bar{q}_1)}_{xx}
-\tfrac{1}{2}\tfrac{\bar \sigma(\mu)}{\bar m(\mu)}\,\bar q_1
-\tfrac{1}{2\mu}\tfrac{\sigma(\mu)}{m(\mu)}
&
\tfrac{1}{2}\,(\bar q_{1})_x
\end{pmatrix} + \dots \ .
\end{aligned}
$$
The zero-order term gives us the Lax pair for the first equation of the BKM hierarchy at $\lambda_0=\infty$, which, as we know, is simply $\sigma(\mu)_t=\sigma(\mu)_x$. The first-order term gives us the $\mathbb{P}$-operator for the first nontrivial element of the BKM hierarchy at $\lambda_0=\infty$, which, as we know, corresponds to the BKM III and BKM IV equations. }
\end{Ex}

As Lax pairs for the BKM equations are the main topic and title of the paper, we write the formulas once more, with the cosmetic change $\bar q_1=\bar q$ below.

\textbf{Lax pair for BKM III. }
\begin{equation}\label{eq:laxBKMIII} 
 \bar L(\mu)  = \left(\begin{array}{cc}
     0 & 1  \\
     - \tfrac{1}{2}\tfrac{\bar \sigma(\mu)}{\bar m(\mu)} & 0 
\end{array}\right),  \quad \bar{\mathbb{P}}(\mu) = \left( \begin{array}{cc}
     - \tfrac{1}{2} \bar q_x & \frac{1}{\mu} + \bar q  \\
     - \tfrac{1}{2\mu} \frac{\bar \sigma(\mu)}{\bar m(\mu)} - \tfrac{1}{2} \bar q_{xx} - \tfrac{1}{2} \tfrac{\bar \sigma(\mu)}{\bar m(\mu)} \bar q & \frac{1}{2}\bar q_x.
\end{array}\right)
\end{equation}
The constraint is  $2\bar q + (-1)^d m_d \bar q_{xx} = \tr L_0 - l_1 \tr L_1 - \dots - l_r \tr L_r$. The corresponding Lax pair is $\bar{\mathbb L}(\mu) = D - \bar L(\mu), \bar{\mathbb{P}}(\mu)$.

\textbf{Lax pair for BKM IV. }
\begin{equation}\label{eq:laxBKMIV} 
\bar L(\mu) = \left(\begin{array}{cc}
     0 & 1  \\
     - \tfrac{1}{2}\frac{\bar \sigma(\mu)}{\bar m(\mu)} & 0 
\end{array}\right),  \quad \bar{\mathbb{P}} (\mu) = \left( \begin{array}{cc}
     - \frac{1}{2} \bar q_x & \frac{1}{\mu} + \bar q  \\
     - \frac{1}{2\mu} \frac{\bar \sigma(\mu)}{\bar m(\mu)} - \frac{1}{2} \bar q_{xx} - \frac{1}{2} \frac{\bar \sigma(\mu)}{\bar m(\mu)} \bar q & \frac{1}{2}\bar q_x 
\end{array}\right)
\end{equation}
with  $\bar q = \tfrac{1}{2}\left(\tr L_0 - l_1 \tr L_1 - \dots - l_r \tr L_r\right)$. The corresponding Lax pair is $ \bar{\mathbb L}(\mu) = D - \bar L(\mu), \bar{\mathbb{P}}(\mu)$.

We would like to emphasize that the Lax equation for the BKM III and BKM IV systems with the above Lax pair is the usual Lax equation \eqref{eq:Lax}, not the ``parametric'' one (having an additional factor on the right-hand side, as in, e.g., \eqref{eq:laxBKM} or \eqref{eq:laxBKMinf}).

\begin{Ex} \label{ex:laxKdV} \rm{As demonstrated in Example \ref{ex:3.3}, the KdV equation \eqref{eq:KdV} is a BKM IV equation corresponding to the parameters $r=0$, $n=1$, and $\bar m(\mu)=-\mu$ ($\Longleftrightarrow\ m=1$). We have already seen in Example \ref{ex:3.3} that, in this case, $\bar q=\tfrac{1}{2}u$ and $\bar \sigma=1-\mu u$. Substituting this into \eqref{eq:laxBKMIV} gives $$ 
\bar{\mathbb{L}}(\mu)
=
\begin{pmatrix}
 D & -1\\ 
-\tfrac{1}{2\mu} + \tfrac{1}{2}u(x,t) &  D
\end{pmatrix}
 \ ,  \quad \bar{\mathbb{P}} (\mu) =\begin{pmatrix}
-\tfrac{1}{4}u_x &
 \tfrac{1}{\mu} + \tfrac{1}{2}u\\[0.6em]
-\tfrac{1}{4}u_{xx} - \tfrac{1}{4}u^2 - \tfrac{1}{4\mu}u + \tfrac{1}{2\mu^2} &
\tfrac{1}{4}u_x
\end{pmatrix}.$$
  
We see that (up to the natural change of $\mu$) it coincides with \eqref{eq:laxnew}.

}    
\end{Ex}

\begin{Ex}
\rm{ Let us calculate the Lax pair for the Kaup--Boussinesq system, which we already considered in Example \ref{ex:KB}. It corresponds to the parameters $n=2$, $r=0$, and $m(\mu)=1$. As explained in Example \ref{ex:KB}, $$ \bar \sigma(\mu)=1-\mu u^1-\mu^2 u^2,\quad \bar m(\mu)=\mu^2,\quad \bar q=\tfrac{1}{2}u^1.$$ Substituting these into \eqref{eq:laxBKMIV} gives

\[
\bar{\mathbb{P}} (\mu)=
\begin{pmatrix}
-\frac14\,u^1_{x} & \frac12\,u^1+\frac{1}{\mu}\\
-\frac{1}{4}\,u^1_{xx}-\frac{1}{2}\,\frac{1 - \mu u^1  - \mu^2 u^2}{\mu^2 }
\left(\frac{1}{2}\,u^1+\frac{1}{\mu}\right)
& \frac{1}{4}\,u^1_{x}
\end{pmatrix}.
\]
The Lax equation $\tfrac{\partial }{\partial t}  \bar{\mathbb{L}}=  [  \bar{\mathbb{P}}  , \bar{\mathbb{L}}]$ gives us  the KB-system \eqref{eq:KB}. 

}    
\end{Ex}

\section{Construction of conservation  laws for BKM systems from the Lax pair representation}\label{sec3}

Fix $m$ and denote by  $C^{\infty}_m=  C^{\infty} (\mathbb R; {\mathbb{C}^m})$ 
the linear space of smooth $m-$vector valued functions of $x\in \mathbb{R}$, that is the functions of the form $\psi = (\psi^1, \dots, \psi^m)^T$ with $\psi_i(x)\in \mathbb{C} $. We will need the following two operators  acting on this space. The first one is $D: C^{\infty}_m  \to C^{\infty}_m  $ is just  
the component-wise differentiation of $\psi$, with respect to the variable   $x$:  $(D\psi)^i = \psi^i_x$.  Two-dimensional version of this operator was used in  Theorem \ref{t5} and Section \ref{sec5}.

The second class of operators are matrix operators $L: C^{\infty}_m  \to C^{\infty}_m  $ whose  components $L^i_j$ are  (complex-valued)   functions of $x$. Such an operator $L$  acts on $\psi$  by the standard   matrix multiplication,  $(L\psi)^i = \sum_{s=1}^m L^i_s \psi^s$.  We denote the linear space of such operators as $\operatorname{Mat}(m)$.

\begin{Ex}
\rm{
If $m = 1$ then we get $D = \pd{}{x}$ and $\operatorname{Mat}(1)$ consists of multiplications by a   function $\ell(x)$.  
}    
\end{Ex}

Further we assume that $m > 1$.  In this case,  the operators $L$ form a non-commutative algebra of matrix-valued functions. 

\begin{Theorem}\label{t1}
Assume $m>1$. Consider the operator $\mathbb{L}:C^{\infty}_m \to C^{\infty}_m $ of the form $\mathbb{L}=D+L$, where $D$ is a derivation in $x$ and $L\in\textrm(m)$, as above. Assume that $L$ has a simple spectrum for all $x$ (i.e., it has $m$ distinct eigenvalues). Then there exists a matrix $C$ such that:
\begin{enumerate}
\item The components $C^i_j$ are differential series in the components $L^i_j$ of $L$.
\item The following identity holds:
\begin{equation}
\label{eq:identity}
C^{-1}(D+L)C = D+L_{\mathrm{diag}},
\end{equation}
where $L_{\mathrm{diag}}$ is diagonal, and its diagonal elements $l_i$ are differential polynomials in the components $L^i_j$.
\end{enumerate}
\end{Theorem}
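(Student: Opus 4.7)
I would build the diagonalizing matrix $C$ order-by-order in differential degree, as a formal product $C = C_0\cdot(I + A_1 + A_2 + \cdots)$, where $C_0$ diagonalizes $L$ pointwise and each $A_k$ is a matrix of differential polynomials of pure degree $k$ (with zero diagonal, say) that kills the degree-$k$ off-diagonal remainder. Since $L$ has simple spectrum at each point, standard smooth diagonalization yields a matrix $C_0$ whose entries are smooth functions of the entries $L^i_j$ such that $C_0^{-1} L C_0 = \Lambda := \mathrm{diag}(\lambda_1,\ldots,\lambda_m)$, with $\lambda_i$ the eigenvalues of $L$. Conjugation of the whole operator gives
\begin{equation*}
C_0^{-1}(D + L) C_0 \;=\; D + \Lambda + M, \qquad M := C_0^{-1}(DC_0),
\end{equation*}
where $M$ is a matrix of differential polynomials of degree exactly $1$ (the chain rule applied to $DC_0$ produces one $x$-derivative of $L^i_j$).

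Now I would set up the inductive step on $k \geq 1$. Assume that $A_1,\ldots,A_{k-1}$ and $L_{\mathrm{diag},1},\ldots,L_{\mathrm{diag},k-1}$ have been constructed so that, with $B_{k-1} := C_0(I + A_1 + \cdots + A_{k-1})$,
\begin{equation*}
B_{k-1}^{-1}(D+L) B_{k-1} \;=\; D + \Lambda + L_{\mathrm{diag},1} + \cdots + L_{\mathrm{diag},k-1} + R_{k-1},
\end{equation*}
where $R_{k-1}$ is a matrix of differential series starting at degree $k$; denote by $R_{k-1,k}$ its degree-$k$ part. A further conjugation by $I+A_k$ with $A_k$ of pure degree $k$ gives, at degree $k$, the balance
\begin{equation*}
R_{k-1,k} + [\Lambda, A_k] \;=\; L_{\mathrm{diag},k},
\end{equation*}
since $[D, A_k]$ acts as multiplication by the matrix $DA_k$ (of degree $k+1$), the brackets of $A_k$ with the already-diagonalized pieces $L_{\mathrm{diag},j}$ have degree $k+j > k$, and the corrections $O(A_k^{\,2})$ in the expansion of $(I+A_k)^{-1} X (I+A_k)$ start at degree $2k$. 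Splitting this equation into diagonal and off-diagonal parts, the diagonal yields $L_{\mathrm{diag},k} = \mathrm{diag}(R_{k-1,k})$, while the off-diagonal entries read $(\lambda_i - \lambda_j)A_k^{ij} = -R_{k-1,k}^{ij}$ for $i \neq j$.

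The decisive ingredient is the simple-spectrum assumption: it gives $\lambda_i - \lambda_j \neq 0$ whenever $i \neq j$, so $\mathrm{ad}(\Lambda)$ is invertible on off-diagonal matrices and the system uniquely solves for $A_k^{ij}$ (the diagonal entries of $A_k$ being free, I set them to zero). Iterating produces $C := C_0(I + A_1 + A_2 + \cdots)$ and $L_{\mathrm{diag}} := \Lambda + L_{\mathrm{diag},1} + L_{\mathrm{diag},2} + \cdots$; by construction the entries of $C$ are differential series (with degree-$0$ part supplied by $C_0$), and each $L_{\mathrm{diag},k}$ is diagonal with entries that are differential polynomials of pure degree $k$ in the components of $L$, as required. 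The only part that requires genuine care is the degree bookkeeping in the expansion of $(I+A_k)^{-1} X (I+A_k)$ and the verification that all contributions outside the linearised equation above lie in strictly higher degrees; this is, however, immediate once one tracks that $A_k$ has pure degree $k$, that $[D,A_k]$ raises the degree by one, and that all $A_k$-quadratic corrections land in degree $\geq 2k$.
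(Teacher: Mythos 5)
Your proposal is correct and follows essentially the same route as the paper: conjugate by a pointwise diagonalizer $C_0$, then solve degree by degree, using that the simple spectrum makes $\mathrm{ad}(\Lambda)$ invertible on off-diagonal matrices (the paper packages this last fact as a separate decomposition lemma, $M=[A,L]+B$ with $B$ in the centralizer, and writes $C=C_0+C_1+\cdots$ instead of $C_0(I+A_1+\cdots)$, but these are only cosmetic differences). The degree bookkeeping you describe matches the paper's recursion.
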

\begin{proof}
We start with a simple  linear-algebraic Lemma.

\begin{Lemma}\label{l1}
Assume that $L$ has a simple spectrum. Then any matrix $M$ can be written in the form
\begin{equation}\label{eq:decomp}
M = [A, L] + B,
\end{equation}
where $[B, M] = 0$.
\end{Lemma}
\begin{proof}
 Without loss of generality, $L$ can be assumed to be diagonal. For diagonal $L$, the decomposition \eqref{eq:decomp} is just the decomposition of $M$ into a diagonal matrix $A$ and a matrix $B'$ with zeros on the diagonal. It is known, and easy to prove, that any such $B'$ can be written as $B'=[A,L]$ for a certain matrix $A$.
\end{proof}

To prove the theorem, we need to establish the existence of matrices $C=C_0+C_1+\dots$ and $L_{\mathrm{diag}}=S_0+S_1+\dots$, where $C_i$ and $S_i$ are matrices whose entries are homogeneous differential polynomials of degree $i$, and, in addition, each $S_i$ is diagonal.

We choose any $C_0\in\operatorname{Mat}(m)$ such that $C_0^{-1}LC_0$ is diagonal; the existence of such a matrix $C_0$ follows from the simplicity of the spectrum. As $C_0$ is nondegenerate, we can equivalently rewrite \eqref{eq:identity} as
$$
(D+L)(C_0+C_1+C_2+\dots)=(C_0+C_1+C_2+\dots)(D+S_0+S_1+\dots).
$$
Indeed, the formula \eqref{eq:rec} can be straightforwardly generalized to matrices. By our choice of $C_0$, the above equation is automatically satisfied at the level of the $(i=0)$-order terms, since $LC_0=C_0S_0$. For each order $i>0$, we obtain the equation

$$
L C_i  + (C_{i - 1})_x = C_i S_0  + C_{i - 1} S_1 + \dots + C_0 S_i.
$$
Multiplying both sides by $C_0^{-1}$ and rearranging the terms, we obtain
\begin{equation} \label{eq:commS}
[S_0, C^{-1}_0 C_i] - S_i = C_0^{-1} \big( C_{i - 1} S_1 + \dots + C_1 S_{i - 1}\big).
\end{equation}
Recall that the entries of $C_i$ are homogeneous differential polynomials of degree $i$, so they are sums of terms of the form \eqref{eq:h(u)}. For any $i_1,\dots,i_k$ and $n_1,\dots,n_k$ such that $i_1n_1+\dots+i_kn_k=i$, the corresponding monomial
$u_{x^{i_1}}^{n_1}\cdots u_{x^{i_k}}^{n_k}$
appears linearly on the left-hand side of \eqref{eq:commS}, with a coefficient in $\operatorname{Mat}(m)$. Applying Lemma \ref{l1} to the coefficient of this monomial in
$M:=C_0^{-1}\big(C_{i-1}S_1+\dots+C_1S_{i-1}\big)$,
we can find $A$ and $B$ (for $L=S_0$). We get that $B=-S_i$ and $A=-C_0^{-1}C_i$. This gives the recursion step to construct $C_i$ and $S_i$ using $C_j,S_j$ with $j<i$. Repeating this recursion procedure, we construct all summands $C_i$ of $C$.

Finally, notice that there exists a formal series $\tilde C$ such that $\tilde C C=C\tilde C=\Id$. The proof is analogous. We denote this series by $C^{-1}$. Theorem is proved.
\end{proof}

\begin{Comment} \label{rem:parC}
\rm{
In the last step of the proof of Theorem \ref{t1}, the matrix $C_i$ is defined up to the addition of an arbitrary diagonal matrix. This implies that the family of such invertible formal differential matrix-valued series $C$ is rather large. However, if one fixes $C_0$ and assumes in addition that all $C_i$ have zeros on the diagonal, then the corresponding $C$ is unique
}    
\end{Comment}

The next statement describes the centralizer of $\mathbb L$.
\begin{Theorem}\label{t2}
Let $\mathbb{L}$ and $C$ be as in Theorem \ref{t1}. Suppose that $A$ is a matrix whose components are differential series such that
$$
[\mathbb{L},A]=0.
$$
Then the matrix $A_{\mathrm{diag}}:=CAC^{-1}$ is diagonal. Moreover, each entry of $A_{\mathrm{diag}}$ is constant 
(that is, an element of $\mathbb{C}$).

\end{Theorem}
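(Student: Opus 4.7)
The plan is to apply Theorem \ref{t1} to conjugate $\mathbb L$ into its diagonal normal form $D + L_{\mathrm{diag}}$, and then to use the grading of the ring of differential polynomials to force the conjugated matrix to be diagonal with constant entries, component by component.

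\textbf{Conjugation.} Set $\tilde A := C^{-1} A C$, which is again a matrix whose entries are formal differential series. Since $C^{-1}(D+L)C = D + L_{\mathrm{diag}}$, the hypothesis $[\mathbb L, A]=0$ transfers to $[D + L_{\mathrm{diag}}, \tilde A] = 0$, i.e.\
\[
D(\tilde A) + [L_{\mathrm{diag}}, \tilde A] = 0.
\]
Writing $L_{\mathrm{diag}} = \operatorname{diag}(l_1,\dots,l_m)$, the commutator $[L_{\mathrm{diag}}, \tilde A]$ vanishes on the diagonal and equals $(l_i - l_j)\tilde A_{ij}$ in the $(i,j)$-slot for $i\ne j$. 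Thus the matrix equation decouples into the diagonal relations $D(\tilde A_{ii}) = 0$ and the off-diagonal relations $D(\tilde A_{ij}) = (l_j - l_i)\tilde A_{ij}$.

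\textbf{Resolution.} The diagonal case is immediate: as recalled in Section \ref{sec1}, the kernel of $D$ on formal differential series consists only of complex constants, so each $\tilde A_{ii}\in\mathbb C$. For $i\ne j$, decompose $\tilde A_{ij} = \sum_{n\ge 0}(\tilde A_{ij})_n$ and $l_j - l_i = \sum_{n\ge 0}(l_j - l_i)_n$ into their homogeneous-degree components. Since $D$ raises degree by one, the degree-$n$ part of the equation reads
\[
D\!\bigl((\tilde A_{ij})_{n-1}\bigr) = \sum_{k=0}^{n} (l_j - l_i)_k\,(\tilde A_{ij})_{n-k}.
\]
The leading term $(l_j - l_i)_0$ is a difference of two distinct eigenvalues of $L$, and hence is nowhere vanishing by the simple-spectrum assumption. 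The case $n=0$ becomes $(l_j - l_i)_0 (\tilde A_{ij})_0 = 0$, forcing $(\tilde A_{ij})_0 = 0$. Induction on $n$ then yields $(\tilde A_{ij})_n = 0$ for all $n$: at each step the right-hand side collapses to $(l_j - l_i)_0 (\tilde A_{ij})_n$ modulo terms that vanish by inductive hypothesis, while the left-hand side vanishes for the same reason. Hence $\tilde A_{ij} = 0$ for $i\ne j$, and $\tilde A$ is diagonal with constant complex entries.

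\textbf{Main obstacle.} The delicate point is the degree bookkeeping in the off-diagonal recursion: one must verify that at degree $n$ the unknown $(\tilde A_{ij})_n$ enters only through the nonvanishing coefficient $(l_j - l_i)_0$, with all other contributions coming from previously resolved components. The simple-spectrum hypothesis is thus indispensable --- without it, $(l_j - l_i)_0$ could vanish and the recursion would either stall or admit nontrivial solutions, so the conclusion of the theorem would fail.
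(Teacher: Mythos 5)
Your proof is correct and follows essentially the same route as the paper's: conjugate by $C$ into the normal form $D+L_{\mathrm{diag}}$ of Theorem \ref{t1}, then induct on the differential degree, using that $\ker D=\mathbb{C}$ for the diagonal entries and that the degree-zero parts $(l_i)_0$ are pointwise distinct eigenvalues of $L$ to kill the off-diagonal entries. The only difference is bookkeeping — you decouple the matrix identity into scalar relations per entry, whereas the paper runs the induction on the homogeneous matrix components $B_k$ directly — and both versions rest on exactly the same two facts.
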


\begin{proof}
By Theorem \ref{t1}, there exists $C$ such that $C^{-1}\mathbb{L}C=D+L_{\mathrm{diag}}$. By the definition of the commutator, we have
$$
0=C^{-1}[\mathbb{L},A]C=[C^{-1}\mathbb{L}C,C^{-1}AC]=[D+L_{\mathrm{diag}}, B]\quad\text{where}\quad B=C^{-1}AC.
$$
The entries of the matrices $L_{\mathrm{diag}}$ and $B$ are differential series in the components $L^i_j$ of $L$,
$$
L_{\mathrm{diag}}=S_0+S_1+\dots,\qquad B=B_0+B_1+\dots,
$$
where $S_i$ and $B_i$ are matrices whose entries are homogeneous differential polynomials of degree $i$ in the components of $L$. Moreover, each $S_i$ is diagonal, and $S_0$ has a simple spectrum. We obtain the equality
\begin{equation}\label{eq:com}
0=[D+S_0+S_1+\dots,B_0+B_1+\dots],
\end{equation}
from which we will prove by induction that $B_0$ is diagonal and has constant entries, and that all other $B_i\equiv 0$.

 First, equating the terms of order zero, we obtain $[B_0,S_0]=0$. As $S_0$ is diagonal with a simple spectrum, $B_0$ is diagonal as well, which of course implies $[S_1,B_0]=0$.

Equating the terms of order one, we obtain the equation

\begin{equation}\label{eq:Bx}
0 = \frac{\partial  B_0}{\partial x}+ [S_0, B_1] + [S_1, B_0] = (B_0)_x + [S_0, B_1].
\end{equation}
Let us explain how $\tfrac{\partial B_0}{\partial x}$ appeared in the previous formula. The term $[D,B_0]$, by the definition of the commutator of two operators, acts on $\psi=(\psi^1,\dots,\psi^m)^T$ by
$D(B_0\psi)-B_0D\psi=\tfrac{\partial B_0}{\partial x}\psi$.

Since $S_0$ is diagonal with simple spectrum, $[S_0,B_1]$ has zeros on the diagonal. This implies that $\tfrac{\partial B_0}{\partial x}=0$, so the entries of $B_0$ are constants, and $B_1$ is diagonal. This is the base of the induction.

Now assume that the statement holds for $k-1\geq 1$, that is, assume that $B_0$ is diagonal and has constant entries, that $B_1=B_2=\dots=B_{k-2}=0$, and that $B_{k-1}$ is diagonal. We need to show that $B_{k-1}\equiv 0$ and that $B_k$ is diagonal. Collecting terms of order $k$ in \eqref{eq:com}, we obtain
\begin{equation} \label{eq:question}
0=D(B_{k-1})+[S_0,B_k]+\dots+[S_k,B_0]=D(B_{k-1})+[S_0,B_k].
\end{equation}
Above, $D(B_{k-1})$ means the operation \eqref{eq:D} applied to all components of $B_{k-1}$. It is different from $DB_{k-1}$, which denotes the operator acting on $\psi$ by $D(B_{k-1}\psi)$.

The matrix $[S_0,B_k]$ has zeros on the diagonal, which implies $D(B_{k-1})=0$ and hence $B_{k-1}\equiv 0$. We also obtain $[S_0,B_k]=0$ and thus $B_k$ is diagonal. Theorem is proved.
\end{proof}

\begin{Comment}
\rm{
In the construction in Theorem \ref{t1}, we may assume that the components $L^i_j$ depend analytically on an additional parameter. Recall that  the components of $L$ from Theorem \ref{t2} depend on $\mu$. In view of Remark \ref{rem:parC}, the components of $C$, and therefore also the components of $L_{\mathrm{diag}}$ and the diagonal entries $a_i$ of $A_{\mathrm{diag}}$, will depend analytically on the same parameter.
}    
\end{Comment}

Let us now explain how Theorem \ref{t1} allows us to find conservation laws from the Lax representation. First observe that the Lax equation \eqref{eq:Lax} with $\mathbb{L}=D+L$ and $L\in\operatorname{Mat}(m)$ can be equivalently rewritten as
$
0=[D+L,\partial_t+\mathbb{P}].
$
Here $\partial_t$ is an evolutionary vector field (see Section \ref{sec1}), and the components of $\mathbb{P}$ are differential series.

For $C$ from Theorem \ref{t1} (whose components are differential series), we obtain
$$
\begin{aligned}
0=&C^{-1}[D+L,\partial_t+\mathbb{P}]C=[C^{-1}(D+L)C,C^{-1}(\partial_t+\mathbb{P})C] 
=[D+L_{\mathrm{diag}},C^{-1}\partial_t C + C^{-1}\mathbb{P}C] \\
=&[D+L_{\mathrm{diag}},\partial_t+\underbrace{C^{-1}\partial_t(C)+C^{-1}\mathbb{P}C}_{F}] 
=[D,F]-[\partial_t,L_{\mathrm{diag}}]+[L_{\mathrm{diag}},F].
\end{aligned}
$$
Above, $\partial_t(C)$ is the matrix whose components are differential series obtained by applying the evolutionary vector field $\partial_t$ to the components of $C$. The explanation of why we may replace the term $C^{-1}\partial_t C$ by $\partial_t + C^{-1}(\partial_t C)$ is similar to the explanation of why we replaced $B_0$ by $(B_0)_x$ in \eqref{eq:com} and \eqref{eq:Bx}.

Next, observe that, as $L_{\textrm{diag}}$ is diagonal, the diagonal elements of $[L_{\textrm{diag}},  F] $ are zero, implying that 
the diagonal elements of 
$$[D, F] - [\partial_t,L_{\mathrm{diag}}] = D(F) -\partial_t( L_{\mathrm{diag}})$$
vanish. Hence, the diagonal components of $L_{\mathrm{diag}}$ are conservation laws. Note that in the case of the Lax operator for the BKM systems, since $m=2$ and $L$ is trace-free, the $(1,1)$-entry of $L_{\mathrm{diag}}$ equals minus the $(2,2)$-entry. As the matrix $L$ depends on the parameter $\mu$, the resulting conservation law also depends on $\mu$. In particular, expanding it as a power series in $\mu$, we obtain a hierarchy of conservation laws. We again note that one can make this expansion at any point $\mu_0\in\mathbb{C}\cup{\infty}$; if we take $\mu_0$ to be a root of the polynomial $m$, the resulting conservation laws are differential polynomials.

Let us now explain the Lax-pair geometry behind the constraint in the BKM equations, and the choice of the word ``normalization'' in Theorem \ref{t4}. First, note that the parameterised Lax equations \eqref{eq:laxBKM} and \eqref{eq:laxBKMinf} have an additional factor with denominator $\lambda-\mu$. The assumption that the equation has no pole at $\mu=\lambda$ is then equivalent to
$$
0=[\mathbb{P}(\lambda),\mathbb{L}(\lambda,\lambda)].
$$
In view of Theorem \ref{t5}, this implies that the eigenvalues of $\mathbb{P}(\lambda)$ are constant. Next, since we work in dimension $2$ and $\mathbb{P}$ is trace-free, the latter condition is equivalent to the condition that $\det(\mathbb{P})$ is constant, which gives our constraint. The ``normalization'' is just the choice of a specific value for this constant.


\section{Conclusion}
As explained in the Introduction, BKM systems were constructed in \cite{nijapp4} within the Nijenhuis geometry research programme initiated in \cite{BMMT,nij1}; the construction was a side result of the study of compatible $\infty$-dimensional geometric Poisson brackets.
A natural question is whether one can combine methods developed for and within the theory of integrable systems with those coming from Nijenhuis geometry.
Our recent publications \cite{nijapp5,nij4} show that the link between Nijenhuis geometry and the theory of integrable systems is useful for Nijenhuis geometry, as objects that appeared within integrable systems theory, such as conservation laws and symmetries, are very helpful in Nijenhuis geometry, and in particular were applied in the classical theory of geodesically equivalent metrics \cite{nijapp5,Fomenko_80}.

It is natural to  go  also in the  other direction, namely to  study the  methods and objects that have appeared and are used in the theory of integrable systems  from the viewpoint of Nijenhuis geometry.
As explained and recalled in the Introduction, many integrable systems of applied and mathematical interest are special cases of BKM systems.
Translating the standard methods of the theory of integrable systems to the language of Nijenhuis geometry may provide a universal treatment of many known and newly obtained integrable systems. Our recent paper \cite{BKMreduction}, see also the survey \cite{BKMmatrix}, is a successful demonstration of this approach: in that paper
we constructed finite-dimensional reductions of BKM systems, recovering, as special cases, known finite-dimensional reductions of certain systems previously studied in the context of mathematical physics, e.g., in \cite{moser, BM2006,BS2023, Dubrovin1975}.

The goal of this paper was to understand whether the Lax pair tool can be generalised to all BKM systems, and we have indeed shown in Theorem \ref{t5}  and Section \ref{sec5} that this is the case.

The Lax pairs we constructed come in the form of a family. That is, besides the ``spectral'' parameter $\mu$, which comes only in the $\mathbb{L}$-element,  there is a dependence  of $\mathbb{P}$  on $\lambda$, which parametrizes the BKM equation. This implies that
the Lax pairs generate those for the components of the hierarchy, and therefore the  symmetries of an initial system,   in a manner similar to the way hierarchies are generated from the initial families of BKM equations. In Section \ref{sec3} we explained how the Lax pair  generates the hierarchy of the conservation laws. 

Let us now discuss the next natural steps and possible applications of our results.
Recall that in the KdV case and certain other cases, the existence of Lax pairs allows the so-called linearization via a compatibility condition; let us recall what this means.
Introduce a spectral problem for an eigenfunction $\psi$:
\begin{equation}\label{eq:spec} \mathbb{L}\psi=\mu\psi, \qquad \psi_t=\mathbb{P}\psi. \end{equation} The compatibility of these two linear equations implies \[ (\mathbb{L}\psi)_t = \mathbb{L}_t\psi + \mathbb{L}\psi_t = \mu \psi_t \;\;\Longleftrightarrow\;\; \mathbb{L}_t=[\mathbb{P},\mathbb{L}], \]
recovering \eqref{eq:Lax}. Eliminating $\psi$ typically yields the original nonlinear PDE/ODE for the coefficients in $\mathbb{L}$.

For the  BKM I and BKM II systems,  the associated linear problem on the line is
\begin{equation} \label{eq:1}
\psi_{xx} + \frac{1}{2}\frac{\sigma(\mu)}{m(\mu)} \psi = 0,
\end{equation}
and for the BKM III and BKM IV systems the the associated linear problem is the barred version of \eqref{eq:1}. 
Recall that the function $m(\mu)$  (and $\bar m(\mu)$) is a polynomial of degree $n$ in $\mu$, while $\sigma(\mu)$ depends rationally on $\mu$, with coefficients that are functions of the field variables. The KdV and Camassa–Holm equations
are special cases of BKM systems with $n=1$ and $m(\mu)\equiv 1$. In this case, $\sigma(\mu)$ is linear in $\mu$ (with constant leading coefficient), so \eqref{eq:1} is exactly the spectral problem for the Sturm–Liouville operator, well known and well studied both for rapidly decaying and quasi-periodic functions. The equation \eqref{eq:1} with $m \equiv 1$ and monic polynomial $\sigma(\mu)$ of arbitrary degree $n$ was studied by 
 L.~Mart\'{\i}nez Alonso in \cite{alonso} and later in, e.g., \cite{af, nab, HrMa}; the last two references contain, in particular, an overview of the literature and new results concerning the spectral properties of such operators, as well as the related inverse scattering problem. Many of these papers focus on the $n=2$ case because of its relation to the Kaup–Boussinesq and Ito systems.
The time evolution,  for the vector-function  $\Psi= (\psi, \psi_x)^T$,   is given by
\[
\Psi_t = \dfrac{1}{\mu-\lambda} \mathbb{P}(\mu,\lambda)\Psi
\quad\Longleftrightarrow\quad
\begin{cases}
\displaystyle
\psi_{t}
= \dfrac{1}{\mu-\lambda}
\Bigl(-\tfrac12 q_x(\lambda)\,\psi + q(\lambda)\,\psi_x\Bigr),\\[0.8em]
\displaystyle
\psi_{tx}
= \dfrac{1}{\mu-\lambda}
\Bigl(\bigl(-\tfrac12 q_{xx}(\lambda) - \tfrac12 \tfrac{\sigma(\mu)}{m(\mu)} q(\lambda)\bigr)\psi 
+ \tfrac12 q_x(\lambda)\,\psi_x\Bigr).
\end{cases}
\]
The compatibility condition
$ 
\Psi_{xt} = \Psi_{tx}
$ 
 is equivalent to the parameterized Lax equation \eqref{eq:laxBKM}, or to the BKM system in the form \eqref{eq:parametricform1} from  Theorem \ref{t5}.

Note that, for  BKM systems, for the associated linear problem the potential \eqref{eq:1}
on the line depends rationally on the parameter and may include both stationary (when the polynomial $m $ is not a constant)   and moving (when $\sigma$ has a nontrivial denominator) poles. It seems that in this setting the problem has not been widely studied, even when $\sigma(\mu)$ is a polynomial (i.e., if $r=0$) and $m(\mu)\neq 1$. We did not find the most general case of \eqref{eq:1}, with $\sigma$ depending rationally on $\mu$, in the literature.

The natural next step would be to study the corresponding linear problem, with the long-term goal of mimicking
the inverse scattering method, which has been successfully applied to certain previously known and studied special cases of BKM systems, for general BKM systems as well. The paper \cite{km1}, in which we discussed
certain well-known phenomena of the linear problem associated with the KdV system from the viewpoint of BKM systems, gives first results in this direction. We cordially invite our colleagues to join the investigation.

\subsection*{Acknowledgments.}    A.\,K. was supported by the Ministry of Science and Higher Education of the Republic of Kazakhstan (grant No. AP23483476), and   V.\,M. by  the DFG (projects 455806247 and 529233771) and the ARC Discovery Programme DP210100951. We thank Alexey Bolsinov for useful discussions during the work on the paper and for careful reading of the final version.

  Data sharing is not applicable to this article, as no datasets were generated or analysed during the current study. The authors declare no conflicts of interest.



\printbibliography
\end{document}